\documentclass[aps, prb, twocolumn, superscriptaddress, preprintnumbers]{revtex4-2}
\usepackage{bm, amsmath, amsfonts, amssymb, ascmac, mathtools, braket}
\usepackage{multirow}
\usepackage{graphicx}
\usepackage{float, color, xcolor}

\usepackage[whole]{bxcjkjatype} 
\usepackage{subfigure} 
\usepackage{amscd} 

\usepackage{bbm}
\usepackage{tabularx}

\usepackage{comment}

\definecolor{rred}{rgb}{0.8, 0.0, 0.0}
\definecolor{bblue}{rgb}{0.0, 0.0, 0.8}

\usepackage[
pagebackref=false,
colorlinks=true,
linkcolor=bblue,
urlcolor=bblue,
filecolor=black,
citecolor=rred,
pdfstartview=FitV,
pdftitle={},
pdfauthor={},
pdfsubject={},
pdfkeywords={},
pdfpagemode=None,
bookmarksopen=true
]{hyperref}

\usepackage{amsthm}
\newtheorem{proposition}{Proposition}
\newtheorem{corollary}{Corollary}

\begin{document}

\title{Interacting quantum criticality with U(1) symmetry breaking in 1+1 dimensions}

\author{Mutsumi Shimomura}
\email{m.shimomura@issp.u-tokyo.ac.jp}
\affiliation{Institute for Solid State Physics, University of Tokyo, Kashiwa, Chiba 277-8581, Japan}

\author{Kohei Kawabata}
\email{kawabata@issp.u-tokyo.ac.jp}
\affiliation{Institute for Solid State Physics, University of Tokyo, Kashiwa, Chiba 277-8581, Japan}

\date{\today}

\begin{abstract}
The Hohenberg-Mermin-Wagner theorem motivates the expectation that continuous symmetry remains unbroken in one-dimensional quantum ground states. 
Here, contrary to this expectation, we demonstrate the spontaneous breaking of U(1) spin-rotation symmetry and the emergence of an interacting quantum critical point in lattice spin-$1/2$ and spin-$1$ chains.
Using subsystem Binder scaling, 
we locate the critical point and determine the correlation-length critical exponent as $\nu = 0.60 \pm 0.02$,
distinct from the Gaussian value $1/2$ but consistent with the perturbative renormalization-group value.  
We further estimate the dynamical critical exponent to be $z = 1.986 \pm 0.005$ from the lowest neutral excitation gap and $z = 1.955 \pm 0.021$ from the longitudinal spin correlations.
The results $z < 2$ are incompatible with microscopic realizations of this critical theory by frustration-free models,
implying a mechanism for continuous symmetry breaking in one-dimensional quantum ground states distinct from frustration freeness. 
\end{abstract}

\maketitle

\section{Introduction}

Universality provides a unifying description of phase transitions and critical phenomena~\cite{Goldenfeld-textbook, Cardy-textbook}.
Microscopic systems with different constituents and interactions can share common critical exponents and scaling functions because their long-distance fluctuations are governed by the same underlying field theory.
The resulting universality classes organize critical phenomena in terms of a small set of general properties, 
such as symmetry and dimensionality.
The renormalization group (RG) makes this organizing principle precise:
microscopic differences that are irrelevant under coarse graining disappear at long distances,
whereas relevant perturbations determine how systems approach or depart from a critical fixed point.
Identifying the fixed points accessible to a given class of physical systems is thus central to understanding collective behavior.

In low dimensions, fluctuations impose particularly strong constraints on critical phenomena.
The Hohenberg-Mermin-Wagner theorem excludes spontaneous breaking of continuous symmetry at nonzero temperature in one- and two-dimensional classical equilibrium systems with sufficiently short-range interactions~\cite{Hohenberg-67, Mermin-Wagner-66, Coleman-73}.
On the basis of the correspondence between quantum statistical mechanics and classical statistical mechanics in one higher dimension~\cite{Kogut-review-79, Sachdev-textbook},
this result motivates the expectation that continuous symmetry remains unbroken in one-dimensional quantum ground states.
The Tomonaga-Luttinger liquid exemplifies this expectation~\cite{Giamarchi-textbook}:
logarithmically growing phase fluctuations replace true long-range order by algebraically decaying quasi-long-range order.
However, the quantum-classical correspondence does not necessarily preserve the assumptions underlying the classical theorem.
In one-dimensional quantum lattice systems,
rigorous restrictions on continuous symmetry breaking are derived only under additional assumptions~\cite{Momoi-96}.

The isotropic ferromagnetic Heisenberg chain is a basic counterexample to this expectation.
Its fully polarized ground states spontaneously break SU(2) spin-rotation symmetry even in one spatial dimension.
A special feature of this model is that the total magnetization, 
serving as the order parameter,
commutes with the Hamiltonian, 
and the ground-state manifold contains polarized product states.
Recently, continuous symmetry breaking has also been demonstrated in more general frustration-free systems whose order parameters are not necessarily conserved~\cite{Watanabe-24, Masaoka-24, Masaoka-25, Masaoka-26, Ono-26}.
In these frustration-free models, the ground state simultaneously minimizes every local term in the Hamiltonian.
These results show that order-parameter conservation is not necessary,
while raising the broader question of the general mechanisms of continuous symmetry breaking in one-dimensional quantum systems.
Continuous symmetry breaking also occurs in the isotropic Heisenberg spin chain with randomly ferromagnetic or antiferromagnetic couplings~\cite{Fava-24}.

From a field-theoretic perspective,
the spin coherent-state path integral reveals an essential ingredient absent from a classical description.
The overlap of coherent states at neighboring imaginary times gives rise to a Berry-phase term, 
determined geometrically by the solid angle traced by the spin orientation~\cite{Fradkin-textbook}.
This contribution is purely imaginary in the Euclidean action and has no direct counterpart in the Boltzmann thermal weight of classical statistical mechanics.
In antiferromagnetic Heisenberg chains, 
cancellation between neighboring spins leaves a topological contribution for half-integer spins,
underlying the Haldane gap phenomena~\cite{Haldane-83PLA, Haldane-83PRL, Affleck-89, Tasaki-textbook}.
In ferromagnets, the Berry phase instead makes the azimuthal phase and longitudinal magnetization canonically conjugate,
changing the long-distance fluctuations from those of relativistic theory~\cite{Leutwyler-94}.
The Berry-phase interference can also modify the proliferation of topological defects in higher dimensions~\cite{Shindou-25, Shindou-26}.

Recently, starting from the ferromagnetic Heisenberg chain,
Nahum proposed continuous symmetry breaking at an interacting quantum critical point~\cite{Nahum-25}.
In the original ferromagnetic Heisenberg chain,
long-wavelength fluctuations around a polarized Heisenberg ground state are governed by a Gaussian fixed point accompanied by Nambu-Goldstone modes with quadratic dispersion.
Reducing SU(2) spin-rotation symmetry to U(1),
while retaining magnetization-reversal and spatial symmetries,
enables a local quartic interaction of the longitudinal magnetization.
Notably, this interaction is relevant in one spatial dimension and destabilizes the Gaussian fixed point.
Tuning the quadratic density term at fixed positive quartic interaction then provides access to the proposed interacting fixed point,
at which the phase fluctuations and their conjugate magnetization remain coupled nontrivially even at long distances.
Long-range order can then coexist with critical behavior that is not described by free Nambu-Goldstone modes.

The symmetry-breaking pattern at this phase transition is also remarkable.
Across the transition point,
one phase is unmagnetized, 
whereas the other exhibits longitudinal long-range order.
Although longitudinal magnetization itself does not break U(1) symmetry,
transverse U(1) long-range order emerges at the critical point separating these phases,
where the longitudinal magnetization becomes critical.
The critical softening of this conjugate degree of freedom lets the phase fluctuations be finite,
in contrast to their logarithmic divergence in the Tomonaga-Luttinger liquids.
Consequently, the critical state is neither a conventional Tomonaga-Luttinger liquid nor a Gaussian ferromagnet:
a nonconserved transverse order parameter coexists with interacting density fluctuations and anisotropic scaling between space and time,
violating Lorentz invariance.

Perturbative RG calculations support the existence of this interacting fixed point near the upper critical spatial dimension $d=2$ and predict corrections to the Gaussian critical exponents~\cite{Nahum-25, FloresCalderon-25}.
However, such a perturbative treatment is not necessarily controlled for the $d=1$ theory.
Moreover, additional low-energy configurations or nonperturbative effects can change the transition.
Accordingly, deriving a candidate field theory does not by itself establish that a given microscopic Hamiltonian flows to the corresponding fixed point,
and numerical studies of microscopic lattice models are needed both to verify the microscopic realization of the proposed criticality and determine its universal properties independently of perturbative expansions.
This distinction is especially important in low dimensions,
where perturbative estimates need not provide quantitatively accurate values of the universal data.

In this work, we investigate the interacting U(1) quantum criticality in uniform spin-$1$ and spin-$1/2$ chains with finite-range interactions.
We first study the spin-$1$ Hamiltonian proposed by Nahum~\cite{Nahum-25} using the density-matrix renormalization group (DMRG).
Since the total longitudinal magnetization is fixed in symmetry-preserving calculations,
the conventional Binder ratio cannot be used to locate the transition.
We instead employ magnetization fluctuations within a subsystem and determine the critical coupling and correlation-length critical exponent $\nu = 0.60 \pm 0.02$.
Notably, this estimated exponent differs from the Gaussian value $\nu_{\rm G} = 1/2$ and is consistent with the perturbative RG value,
providing direct numerical evidence for interacting quantum critical behavior.
At the critical coupling determined from the subsystem Binder scaling,
we also establish transverse long-range order through the finite-size and finite-entanglement scaling analyses of the squared transverse magnetization density in symmetry-preserving states.

Furthermore, we numerically calculate the lowest neutral excitation gap at the critical point and determine the dynamical critical exponent $z$,
which characterizes the low-energy dispersion $\omega \propto \left| k\right|^z$.
The finite-size scaling analysis yields $z = 1.986 \pm 0.005$,
providing evidence for a small but significant deviation from the Gaussian value $z_{\rm G} = 2$,
consistent with the perturbative result.
It should be noted that the uncertainty in $z$ is estimated within a particular extrapolation scheme (see Sec.~\ref{sec:critical-dynamics} for details).
We also calculate the longitudinal spin correlations and extract the operator scaling dimension,
using the variational uniform matrix product state (VUMPS) algorithm.
Applying the scaling relations, we independently estimate $z = 1.955 \pm 0.021$.
The deviation from $z_{\rm G} = 2$ also distinguishes this critical point from frustration-free systems.
For local frustration-free Hamiltonians with the power-law ground-state correlations,
rigorous bounds require the dynamical critical exponent $z$ to satisfy $z \geq 2$~\cite{Masaoka-25}.
Accordingly, the observed subquadratic neutral dynamics with $z = 1.986 \pm 0.005 < 2$ is incompatible with a frustration-free realization of this critical theory.
This provides a criterion for distinguishing the present interacting transition from continuous symmetry breaking realized in frustration-free constructions.

Finally, we construct a quantum spin-$1/2$ chain with a four-site longitudinal interaction and investigate its phase transition.
We explicitly derive its path-integral description and find that it reduces to the same effective field theory as the spin-$1$ counterpart.
We numerically establish its spontaneous U(1) symmetry breaking and
estimate its correlation-length critical exponent to be $\nu = 0.61 \pm 0.02$,
consistent with the value obtained for the spin-$1$ model.
This agreement supports a common interacting universality class despite the different local Hilbert spaces and microscopic anisotropies.
These two microscopic realizations connect the field-theoretic description to lattice quantum systems and demonstrate how continuous symmetry breaking can coexist with non-Gaussian critical behavior in $1+1$ dimensions.

The rest of this work is organized as follows.
In Sec.~\ref{sec:FieldTheory}, we review the field-theoretic description and its scaling predictions.
In Sec.~\ref{sec:Models}, we introduce the two quantum spin chains and derive their coherent-state descriptions.
In Sec.~\ref{sec:binder}, we determine the critical coupling and correlation-length critical exponent $\nu$ of the spin-$1$ chain.
In Sec.~\ref{sec:SSB}, we establish spontaneous U(1) symmetry breaking.
In Sec.~\ref{sec:critical-dynamics}, we analyze the neutral excitation spectrum at the critical point and obtain the dynamical critical exponent $z$.
In Sec.~\ref{sec:longitudinal}, we analyze the longitudinal spin correlations and
estimate the corresponding operator scaling dimension and $z$.
In Sec.~\ref{sec:spin_half}, we investigate the spin-$1/2$ model and confirm the critical behavior consistent with the spin-$1$ case.
In Sec.~\ref{sec:frustration-free}, we discuss the connection between the present universality class and frustration freeness.
We conclude in Sec.~\ref{sec:conclusion}.

\section{Field Theory}
	\label{sec:FieldTheory}

We clarify a field-theoretic scenario for spontaneous U(1) symmetry breaking at an interacting quantum critical point in $1+1$ dimensions, following Nahum~\cite{Nahum-25}. 
The theory is formulated in terms of a compact U(1) phase field $\theta$ and its canonically conjugate field $\psi$. 
The Euclidean action in $d$ spatial dimensions reads
\begin{align}
S &= \int d\tau\,d^dx\,\biggl[i\psi\partial_\tau\theta
+\frac{D_\theta}{2}(\nabla\theta)^2
+\frac{D_\psi}{2}(\nabla\psi)^2
\notag\\
&\hspace{9em}
+\lambda_2\psi^2+\lambda_4\psi^4\biggr],
    \label{eq:action}
\end{align}
where $\lambda_2$ is the parameter tuned to reach criticality (see below), and $D_\theta$, $D_\psi$, and $\lambda_4$ are positive parameters. 
This is the leading-order form of a general local effective field theory for these fields, compatible with Hermiticity
\begin{equation}
    \mathcal L(\theta,\psi,\partial_\tau) =\mathcal L(\theta,\psi,-\partial_\tau)^*,
\end{equation}
U(1) symmetry
\begin{equation}
    \mathcal L(\theta,\psi,\partial_\tau) = \mathcal L(\theta+\alpha,\psi,\partial_\tau) \quad \left( \alpha \in \mathbb{R} \right),
\end{equation}
and $\mathbb{Z}_2$ antiunitary magnetization-reversal symmetry
\begin{equation}
    \mathcal L(\theta,\psi,\partial_\tau) =\mathcal L(\theta,-\psi,\partial_\tau)^*,
\end{equation}
as well as invariance under spatial rotation and reflection $\nabla \mapsto R \nabla$ [$R \in$ O($d$)].

For a nonzero renormalized $\lambda_2$, the quartic potential $V=\lambda_2\psi^2+\lambda_4\psi^4$ is minimized at
\begin{equation}
    \psi_0 = \begin{cases}
        0 & \left( \lambda_2 > 0 \right), \\
        \displaystyle
        \pm \sqrt{- \frac{\lambda_2}{2\lambda_4}} & \left( \lambda_2 < 0 \right).
    \end{cases}
\end{equation} 
Hence, $\lambda_2=0$ marks the critical point, and the two phases are distinguished by $\mathbb Z_2$ symmetry $\psi\mapsto-\psi$,
which is preserved for $\lambda_2>0$ and spontaneously broken for $\lambda_2<0$.

In either phase, the quartic potential $V$ can be expanded
around a chosen minimum $\psi_0$ as
\begin{equation}
V(\psi_0+\delta\psi)=V(\psi_0)
+\frac{M}{2}(\delta\psi)^2
+O((\delta\psi)^3),
\end{equation}
with $\delta\psi=\psi-\psi_0$ and $M=V''(\psi_0)>0$.
Retaining the leading low-energy terms gives the quadratic action
\begin{align}
S_{\mathrm{quad}}
&=\int d\tau\,d^dx\,\biggl[
i\delta\psi\partial_\tau\theta
+\frac{D_\theta}{2}(\nabla\theta)^2
+\frac{M}{2}(\delta\psi)^2
\biggr]
\notag\\
&=\int d\tau\,d^dx\,\biggl[
\frac{(\partial_\tau\theta)^2}{2M}
+\frac{D_\theta}{2}(\nabla\theta)^2
\notag\\
&\hspace{7em}
+\frac{M}{2}
\left(\delta\psi+\frac{i\partial_\tau\theta}{M}\right)^2
\biggr],
\end{align}
where the total derivative $i\psi_0\partial_\tau\theta$
has been omitted.
Gaussian integration over $\delta\psi$ then yields a free compact-boson theory with the effective action
\begin{equation}
S_{\mathrm{eff}}
=\int d\tau\,d^dx\,\biggl[
\frac{(\partial_\tau\theta)^2}{2M}
+\frac{D_\theta}{2}(\nabla\theta)^2
\biggr].
\label{eq:compact-boson-theory}
\end{equation}
In $1+1$ dimensions, fluctuations of this phase field $\theta$ grow logarithmically with distance:
\begin{equation}
\left\langle[\theta(x)-\theta(0)]^2\right\rangle
= A \ln\frac{|x|}{a} +O(1)
\quad \left( A>0 \right),
    \label{eq:theta-log-divergence}
\end{equation}
where $a$ is a short-distance cutoff. 
These logarithmically growing phase fluctuations preclude spontaneous U(1) symmetry breaking, leaving only quasi-long-range order in both $\mathbb{Z}_2$-symmetry-unbroken phase ($\lambda_2 > 0$) and $\mathbb{Z}_2$-symmetry-broken phase ($\lambda_2 < 0$).

For $\lambda_2 = 0$,
the quadratic coefficient $M$ vanishes at $\psi_0=0$, 
and $\delta\psi=\psi$ can no longer be integrated out from the theory.
In general, the action $S$ at a critical point is invariant under the scale transformation,
\begin{equation}
\mathbf x \mapsto b \mathbf x,~~
\tau \mapsto b^z\tau,~~
\theta \mapsto b^{-\Delta_\theta}\theta,~~
\psi \mapsto b^{-\Delta_\psi}\psi,
\end{equation}
with the dynamical critical exponent $z$ and scaling dimensions $\Delta_{\theta}$ and $\Delta_{\psi}$.
To analyze criticality for $\lambda_2 = 0$, we first consider the Gaussian limit $\lambda_4=0$, 
corresponding to the conventional ferromagnetic Heisenberg point.
The Gaussian action reads
\begin{align}
S_{\rm G} &= \int d\tau\,d^dx\,\biggl[i\psi\partial_\tau\theta
+\frac{D_\theta}{2}(\nabla\theta)^2
+\frac{D_\psi}{2}(\nabla\psi)^2
\biggr],
\end{align}
with the corresponding dynamical critical exponent $z_{\mathrm G}=2$ and field scaling dimensions
$\Delta_{\theta, {\rm G}} = \Delta_{\psi, {\rm G}}=d/2$.
Under this transformation, the quadratic coupling changes as
$\lambda_2\mapsto b^{1/\nu}\lambda_2$, with
\begin{align}
\frac{1}{\nu_{\rm G}}=d+z_{\mathrm G}-2\Delta_{\psi, {\rm G}}=2,
    \label{eq:nu_Gaussian}
\end{align}
giving the correlation-length critical exponent $\nu_{\mathrm G}=1/2$.
Similarly, the quartic coupling changes as $\lambda_4\mapsto b^{y_4}\lambda_4$, with
\begin{align}
y_4=d+z_{\mathrm G}-4\Delta_{\psi, {\rm G}}=2-d.
\end{align}
As $y_4$ is positive in $d<2$ spatial dimensions, the quartic interaction is relevant in one spatial dimension. It destabilizes the Gaussian fixed
point and may drive the RG flow toward an interacting
fixed point.

Perturbative RG for $d=2-\varepsilon$ ($0 < \varepsilon \ll 1$) identifies an interacting fixed point with critical exponents~\cite{Praestgaard-00, Nahum-25, FloresCalderon-25}
\begin{align}
\nu
&= \frac12+\frac{\varepsilon}{12} 
+ \frac{1}{36} \left( \frac{67}{54} + \ln \frac{4}{3} \right) \varepsilon^2 + O (\varepsilon^3)
\nonumber \\
&\simeq 0.6258,
    \label{eq:nu_perturbative_RG}\\
z
&=2-\frac{2}{243}\varepsilon^2+O(\varepsilon^3) \nonumber \\
&\simeq 1.992
    \label{eq:z_perturbative_RG}
\end{align}
for $\varepsilon = 1$ (i.e., $d=1$).
The leading corrections to $\nu$ and $z$
arise at one and two loops,
respectively.
In the quartic theory considered here, interaction
vertices involve only $\psi$.
Consequently, the coefficients of the quadratic terms
$i\psi\partial_\tau\theta$ and
$(\nabla\theta)^2$ receive no loop
corrections~\cite{Nahum-25}.
The scale invariance of these two terms requires
\begin{align}
d+z-(\Delta_\theta+\Delta_\psi+z) = 0,\\
d+z-(2\Delta_\theta+2) = 0,
\end{align}
yielding the scaling relations
\begin{equation}
\Delta_\theta=\frac{d+z-2}{2},
\quad
\Delta_\psi=d-\Delta_\theta = \frac{d-z+2}{2}.
\label{eq:field-scaling-relations}
\end{equation}
At the interacting fixed point in one spatial dimension, the scaling dimensions are estimated as
\begin{align}
\Delta_\theta
&=1-\frac{\varepsilon}{2}
-\frac{\varepsilon^2}{243}+O(\varepsilon^3)
\simeq 0.496,\\
\Delta_\psi
&=1-\frac{\varepsilon}{2}
+\frac{\varepsilon^2}{243}+O(\varepsilon^3)
\simeq 0.504.
\end{align}
If the perturbative prediction $\Delta_\theta>0$ is valid at $\varepsilon=1$, 
the phase fluctuations remain finite even at long distances,
allowing spontaneous U(1) symmetry breaking.
This contrasts with the Tomonaga-Luttinger liquid,
characterized by $z=1$ and $\Delta_\theta = 0$,
for which the phase fluctuations diverge logarithmically as in Eq.~\eqref{eq:theta-log-divergence}.

The small-$\varepsilon$ perturbative results above are complemented by large-$N$ analysis of an $N$-component extension of the
theory~\cite{Nahum-25}. 
The $\varepsilon$ expansion is controlled for $\varepsilon\ll1$, whereas the large-$N$ analysis is controlled for $N\gg1$.
Applying these results to the original one-dimensional theory requires extrapolation to $\varepsilon=1$ and $N=1$, respectively, beyond these controlled limits.
Accordingly, a microscopic realization provides a complementary nonperturbative test of whether the proposed interacting fixed point is realized.
It also allows its universal critical properties, including the critical exponents, to be determined directly without relying on these extrapolations.

\section{Models}
    \label{sec:Models}

\subsection{Spin-chain Hamiltonians}
    \label{subsec:spin-chain-hamiltonians}

We study two uniform spin chains with finite-range interactions.
Our main analysis concerns the spin-1 model proposed by
Nahum~\cite{Nahum-25},
\begin{align}
H_1
&=-J\sum_i\mathbf S_i\cdot\mathbf S_{i+1}
+g_1\sum_i(S_i^z)^2\notag\\
&\qquad\quad+g_2\sum_i(S_i^z)^2(S_{i+1}^z)^2.
\label{eq:model-spin-one}
\end{align}
We also consider a spin-$1/2$ chain with a four-site interaction,
\begin{align}
H_{1/2}
&=-J\sum_i\mathbf S_i\cdot\mathbf S_{i+1}
+g_1\sum_iS_i^zS_{i+1}^z\notag\\
&\qquad\quad+g_4\sum_iS_i^zS_{i+1}^zS_{i+2}^zS_{i+3}^z.
    \label{eq:model-spin-half}
\end{align}
Here, $\mathbf S_i = \left( S_i^x, S_i^y, S_i^z \right)$ is a spin-$S$ operator at site $i$,
and periodic chains with $L$ sites are considered.
We take $J>0$ so that the exchange is ferromagnetic.
In the numerical calculations below, we set $J=1$, fix $g_2=1$ for spin $1$ and $g_4=1$ for spin $1/2$, 
and tune the longitudinal anisotropy $g_1$ to reach criticality.

\subsection{Coherent-state description}
    \label{subsec:coherent-state-description}

We use spin coherent states to express the partition function
$Z=\operatorname{Tr}e^{-\beta H}$ as a path integral~\cite{Fradkin-textbook}. 
Here, $\beta$ is the inverse temperature, and the ground-state limit corresponds to
$\beta\to\infty$. 
A spin-$S$ coherent state $|\mathbf n\rangle$ is fully polarized along a unit vector $\mathbf n$, 
satisfying $\langle\mathbf n|\mathbf S|\mathbf n\rangle=S\mathbf n$.
We parametrize this unit vector $\mathbf n$ as
\begin{align}
    \mathbf n&=(\sqrt{1-\psi^2}\cos\theta,\sqrt{1-\psi^2}\sin\theta,\psi),
    \label{eq:coherent-variables}
\end{align}
where $\psi\in[-1,1]$ is the longitudinal component of $\mathbf n$,
and $\theta$ is its azimuthal angle, defined modulo $2\pi$.

For a local spin-$S$ chain $H=\sum_i\hat h_i$, 
define the coherent-state energy as
\begin{equation}
h_i(\{\mathbf n_j\})
=\langle\{\mathbf n_j\}|\hat h_i|\{\mathbf n_j\}\rangle,\qquad
|\{\mathbf n_j\}\rangle=\bigotimes_j|\mathbf n_j\rangle.
\label{eq:coherent-local-energy}
\end{equation}
The partition function is then described by the effective field theory
\begin{align}
Z&\simeq\int\mathcal D\theta\mathcal D\psi\,
e^{-S_E},
\label{eq:coherent-partition-function}
\end{align}
with the Euclidean action
\begin{align}
S_E
&=\int_0^\beta d\tau\sum_i\left[
iS(1-\psi_i)\partial_\tau\theta_i
+h_i(\{\mathbf n_j\})\right].
\label{eq:coherent-lattice-action}
\end{align}
The first term is the Berry-phase contribution, arising from the phase of the overlap between spin coherent states at neighboring imaginary times.
Geometrically, its time integral measures the solid angle 
enclosed 
by the spin direction on the Bloch sphere.
In Euclidean time, this term is purely imaginary and has no counterparts in conventional classical equilibrium spin models, 
providing a quantum ingredient of the spin path integral. 

To obtain a spatially continuum description, 
we consider configurations that vary slowly on the scale of the lattice spacing \(a\) and introduce a continuous field \(\mathbf n(x,\tau)\) such that \(\mathbf n_i(\tau)=\mathbf n(ia,\tau)\). We then  apply the continuum limit
\begin{align}
\sum_i&\longrightarrow a^{-1}\int dx,\\
\mathbf n_{i+1} &=\mathbf n(x+a) \nonumber \\
&\longrightarrow\mathbf n(x)+a\partial_x\mathbf n(x)
+\frac{a^2}{2}\partial_x^2\mathbf n(x)+\cdots,
\label{eq:neighbor-taylor}
\end{align}
yielding a continuum action in space and imaginary time.
In the following spatial gradient expansions, 
we omit total spatial derivatives, 
which integrate to zero for periodic boundary conditions.

We now apply the above procedure to the two quantum spin chains introduced in Sec.~\ref{subsec:spin-chain-hamiltonians}. 
The ferromagnetic Heisenberg part gives
\begin{align}
&\langle\{\mathbf n_j\}|-J\mathbf S_i\cdot\mathbf S_{i+1}
|\{\mathbf n_j\}\rangle 
=-JS^2\mathbf n_i\cdot\mathbf n_{i+1}\notag\\
&\qquad\qquad\quad\longrightarrow\frac{JS^2a^2}{2}
\left[(\partial_x\theta)^2+(\partial_x \psi)^2\right]+\cdots,
\label{eq:exchange-gradient-expansion}
\end{align}
where constants and interactions irrelevant at the Gaussian fixed point are omitted. 
For the spin-1 chain, the remaining interactions give
\begin{align}
\langle(S_i^z)^2\rangle
&=\frac{1+\psi_i^2}{2}
\longrightarrow\frac{\psi^2}{2}+\cdots,\\
\langle(S_i^z)^2(S_{i+1}^z)^2\rangle
&=\frac{(1+\psi_i^2)(1+\psi_{i+1}^2)}4\notag\\
&\longrightarrow\frac{\psi^2}{2}+\frac{\psi^4}{4}+\cdots,
\end{align}
where all the expectation values are taken with respect to \(|\{\mathbf n_j\}\rangle\). 
Setting \(a=1\), we obtain the leading local Lagrangian
\begin{align}
\mathcal L_1
&=-i\psi\partial_\tau\theta
+\frac J2\left[(\partial_x\theta)^2+(\partial_x\psi)^2\right]\notag\\
&\qquad\qquad+\frac{g_1+g_2}{2}\psi^2+\frac{g_2}{4}\psi^4+\cdots.
\label{eq:lagrangian-spin-one}
\end{align}
Here, we ignore the winding term $i\partial_\tau\theta$ in the Berry phase, as it gives a trivial phase for integer spin $S$. 
At this bare level, varying \(g_1\) at fixed \(g_2>0\) tunes the quadratic potential alone.

For the spin-\(1/2\) chain, \(\langle S_i^z\rangle=\psi_i/2\) gives
\begin{align}
&\langle S_i^zS_{i+1}^z\rangle
=\frac{\psi_i\psi_{i+1}}4 
\longrightarrow\frac{\psi^2}{4}
-\frac18(\partial_x\psi)^2+\cdots,\\
&\langle S_i^zS_{i+1}^zS_{i+2}^zS_{i+3}^z\rangle
=\frac1{16}\prod_{r=0}^{3}\psi_{i+r} 
\longrightarrow\frac{\psi^4}{16}+\cdots.
\end{align}
The corresponding leading local Lagrangian is
\begin{align}
\mathcal L_{1/2}
&=-\frac i2\psi\partial_\tau\theta
+\frac J8(\partial_x\theta)^2
+\frac{J-g_1}{8}(\partial_x\psi)^2\notag\\
&\qquad\qquad+\frac{g_1}{4}\psi^2+\frac{g_4}{16}\psi^4+\cdots.
\label{eq:lagrangian-spin-half}
\end{align}
Again, we omit the Berry-phase winding term proportional to
$i\partial_\tau\theta$, following Nahum's argument that it does
not affect the universal critical behavior at the proposed
transition, even for half-integer spin $S$~\cite{Nahum-25}.
However, varying \(g_1\) changes the coefficients of both \(\psi^2\) and \((\partial_x\psi)^2\), 
while the quartic coefficient remains fixed by \(g_4>0\). 
Whether the proposed critical point can be reached by tuning \(g_1\) alone is much less clear than in the spin-$1$ model, 
where the coefficient of \(\psi^2\) can be tuned independently at the bare level.

Although the continuum actions derived above have the same local structure as the proposed field theory, 
this agreement alone does not establish that they describe the low-energy
physics of the spin chains. 
For example, the expansion about a spatially uniform configuration near $\psi=0$ may miss the true low-energy excitations if a different spin configuration is energetically favored. 
Numerical calculations should provide a direct test of whether these models realize the proposed critical behavior.
We discuss further microscopic realizations in
Appendix~\ref{app:general-spin-models}.

\section{Critical point and correlation-length critical exponent}
	\label{sec:binder}

\begin{figure*}[t]
  \centering
  \includegraphics[width=\textwidth]{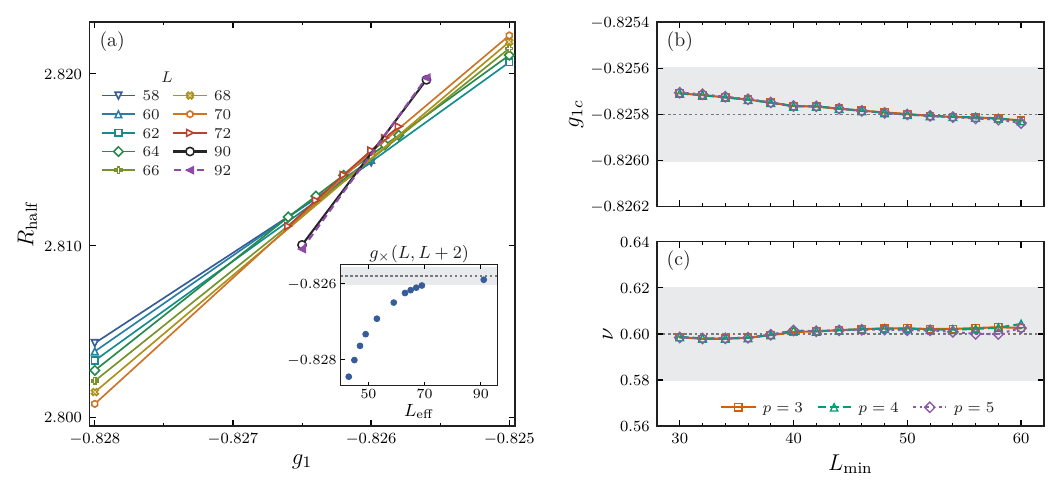}
\caption{Subsystem Binder ratio and finite-size scaling for the spin-$1$ chain.
(a)~Half-chain Binder ratio $R_{\text{half}}=\langle (M_{z, \text{half}})^4\rangle/\langle (M_{z, \text{half}})^2\rangle^2$
as a function of the anisotropy parameter $g_1$ for the indicated system sizes $L$,
with $M_{z, \text{half}}=\sum_{i=1}^{L/2}S_i^z$. The lines connect the measured points.
Inset: crossing points $g_\times(L,L+2)$ of the Binder-ratio curves for system sizes $L$ and $L+2$, 
plotted against $L_{\mathrm{eff}}=\sqrt{L(L+2)}$.
The crossing points are estimated by linearly interpolating the Binder ratios between
the sampled values of $g_1$ for both system sizes.
(b), (c)~Estimates of the critical coupling $g_{1c}$ and the correlation-length critical exponent $\nu$ as functions of the minimum system size $L_{\min}$ included in each fit. 
The fitting form is
$R_{\text{half}}=F_p(x)+L^{-\omega}(b_0+b_1x)$ with
$x=(g_1-g_{1c})L^{1/\nu}$. 
Here, $F_p$ is a polynomial of degree $p=3,4,5$.
All coefficients, 
as well as $g_{1c}$, $\nu$, and $\omega$, are fitted. 
The fitting window is
$|g_1+0.8258|\leq0.01$.
The lines connect the successive estimates.
The shaded bands indicate the adopted systematic uncertainty intervals,
$g_{1c}=-0.8258\pm0.0002$ and $\nu=0.60\pm0.02$, rather than statistical confidence
intervals for individual fits. 
The dashed horizontal lines mark their central values.}
\label{fig:binder_scaling}
\end{figure*}

We first investigate the spin-$1$ chain in Eq.~\eqref{eq:model-spin-one} with $J=g_2=1$,
varying $g_1$ to reach the critical point.
We determine the critical coupling
$g_{1c}$ and the correlation-length critical exponent $\nu$ from the Binder ratio~\cite{Binder-81, Binder-81PRL}.
The results support a continuous phase transition with $\nu=0.60 \pm 0.02$,
distinct from the Gaussian value $1/2$.
The independently determined $g_{1c}$ also provides
the reference coupling for the subsequent analyses
of spontaneous U(1) symmetry breaking, critical
correlations, and dynamics.

As explained in Sec.~\ref{sec:FieldTheory}, the two off-critical phases are distinguished by the breaking
of $\mathbb Z_2$ symmetry under which the
longitudinal magnetization $S^z \sim \psi$ changes its sign.
This is the same $\mathbb Z_2$-symmetry-breaking pattern
as in the transverse-field Ising model, 
where finite-size scaling of the Binder ratio has been used to locate the critical point~\cite{West-15}.
For a scalar order parameter $m$,
the Binder ratio is defined as
\begin{align}
R=\frac{\langle m^4\rangle}{\langle m^2\rangle^2}.
\end{align}
At a critical fixed point, the order-parameter distribution is invariant under coarse graining accompanied by field rescaling, 
and the Binder ratio $R$ converges to a constant $R^\ast$ as the system size $L$ grows to infinity.
The curves of
$R$ versus $g_1$ for different system sizes pass
through the same point $(g_{1c},R^\ast)$.
Their crossings can thus be used to locate
the critical point.
Finite-size corrections shift these crossings and must be included in a quantitative analysis.

Since our spin-$1$ model conserves the total longitudinal magnetization
$M_z=\sum_{i=1}^{L}S_i^z$, 
its energy eigenstates can be chosen to have definite $M_z$.
$\mathbb Z_2$ magnetization reversal maps $M_z$ to $-M_z$,
and hence a nondegenerate finite-chain ground state must belong to the $M_z=0$ sector.
This is the ground-state sector expected in the $\mathbb Z_2$-symmetric phase and at the continuous phase transition.
In this sector, both $\langle m^2_z\rangle$ and $\langle m^4_z\rangle$ vanish, 
where $m_z=M_z/L$ is the magnetization density, making its Binder ratio undefined.
To circumvent this subtlety, we consider the magnetization of a half chain, which can fluctuate even when the total magnetization is fixed. 
We define the half-chain magnetization and its Binder ratio as
\begin{equation}
M_{z, \text{half}}
=\sum_{i=1}^{L/2}S_i^z,
\qquad
R_{\text{half}}
=\frac{\langle (M_{z, \text{half}})^4\rangle}
        {\langle (M_{z, \text{half}})^2\rangle^2}.
\label{eq:half-block-binder}
\end{equation}

We use the DMRG method to obtain ground states of finite spin chains
in the $M_z=0$ sector~\cite{White-92,White-93},
with the convergence criteria described in Appendix~\ref{app:ground-state-dmrg}.
In Fig.~\ref{fig:binder_scaling}(a), we  show the half-chain Binder ratio $R_{\text{half}}$ as a function of $g_1$ for
different system sizes.
The curves cross in a narrow region near
$g_1\simeq-0.826$, 
showing the expected Binder-crossing behavior. 
The crossing point $g_\times(L,L+2)$
between the curves for sizes $L$ and $L+2$
converges to a constant as $L$ increases (see the inset).
These observations support the existence of a critical point accessible by tuning $g_1$ alone.

We now turn to a quantitative finite-size scaling analysis of $R_{\text{half}}$.
The detuning $g_1-g_{1c}$ corresponds to a change in the coefficient of $\psi^2$ in the effective field theory.
Near the critical fixed point, RG coarse graining gives, to leading order,
\begin{align}
g_1-g_{1c}&\mapsto b^{1/\nu}(g_1-g_{1c}),
&
L&\mapsto L/b.
\end{align}
The natural dimensionless scaling variable is then
\begin{align}
x=(g_1-g_{1c})L^{1/\nu}.
\end{align}
Finite-size corrections arise from irrelevant RG perturbations.
Retaining the leading power-law correction,
we use the scaling form
\begin{align}
R_{\text{half}}(L,g_1)
&=F(x)+L^{-\omega}G(x),
    \label{eq:binder-fss}
\end{align}
where $F$ and $G$ are scaling
functions. 
For the numerical fit, we approximate the scaling functions by polynomials
\begin{align}
F(x)&\simeq F_p(x)=\sum_{n=0}^{p}a_nx^n,
&
G(x)&\simeq b_0+b_1x.
    \label{eq:binder-fss-poly}
\end{align}

In Figs.~\ref{fig:binder_scaling}(b) and (c), we summarize the estimates of $g_{1c}$ and $\nu$ for different
polynomial degrees and minimum system sizes included in each fit.
The estimates depend only weakly on the polynomial degree $p=3,4,5$.
Increasing $L_{\min}$ produces a modest drift in $g_{1c}$, 
while $\nu$ is stable.
We then adopt the estimates
\begin{equation}
g_{1c}=-0.8258 \pm 0.0002,
\qquad
\nu= 0.60 \pm 0.02.
\label{eq:binder-estimates}
\end{equation}
The uncertainties define conservative systematic ranges intended to encompass the true critical parameters
based on the sensitivity of the estimates to fitting choices,
instead of statistical errors from individual fits.
In contrast to the
stable estimates of $g_{1c}$ and $\nu$,
the estimates of the finite-size correction exponent $\omega$
depend sensitively on the choice of $L_{\min}$.
Nevertheless, the fitted values consistently satisfy
$\omega\geq2$ across the choices of $L_{\min}$ and polynomial
degree $p$ examined here.
We provide further details in Appendix~\ref{app:binder-omega}.

The numerically estimated $\nu$ is distinct from the Gaussian value $\nu_{\mathrm G}=1/2$ in Eq.~\eqref{eq:nu_Gaussian}.
Equivalently, the RG eigenvalue $1/\nu$ of the relevant perturbation $\psi^2$ differs from its Gaussian value $2$, 
providing numerical evidence for the interacting critical behavior.
Moreover, the estimated value of $\nu$ lies between the perturbative RG results $\nu \simeq 0.5833$ at $O (\varepsilon)$ and $\nu \simeq 0.6258$ at $O (\varepsilon^2)$ in Eq.~\eqref{eq:nu_perturbative_RG}.
The critical coupling $g_{1c}$ determined here provides
the reference point for the subsequent analyses.

\section{Spontaneous symmetry breaking} 
	\label{sec:SSB}

\begin{figure*}[t]
  \centering
  \includegraphics[width=\textwidth]{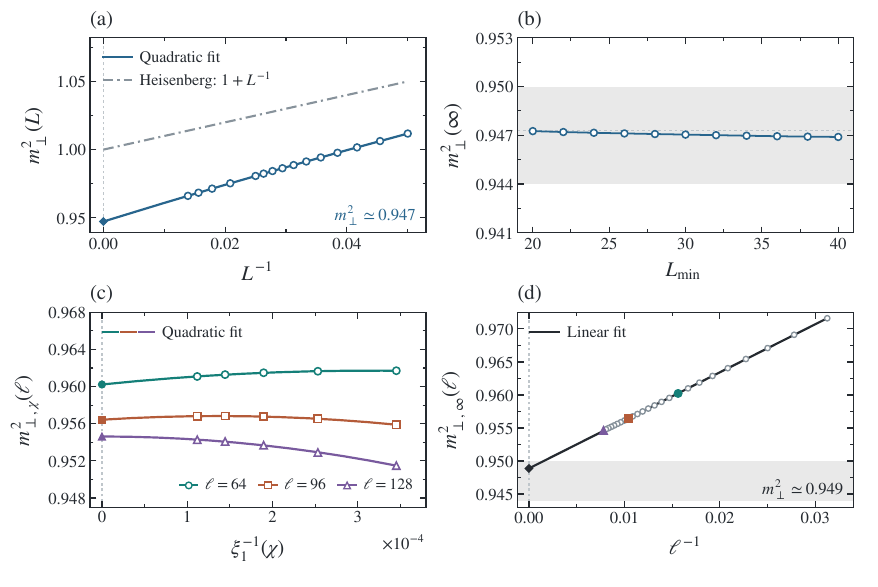}
\caption{Finite-size and finite-entanglement scaling of the squared
transverse magnetization $m_\perp^2$ in the spin-$1$ chain ($J=g_2=1$, $g_1=-0.8258$).
(a)~DMRG results of $m_\perp^2(L)$ in the $M_z=0$ sector as a function of $L^{-1}$.
The solid line is a quadratic fit in $L^{-1}$ over
$20\leq L\leq72$.
The gray dash-dotted line shows the exact ferromagnetic
Heisenberg result $m_\perp^2(L)=1+1/L$.
(b)~Extrapolated $m_\perp^2(\infty)$ as a function of the minimum system size $L_{\min}$ included in each fit, with $L_{\max}=72$ fixed.
The horizontal dotted line marks the representative estimate
$0.9473$ from the fit with $L_{\min}=20$.
(c)~VUMPS results of $m_{\perp, \chi}^2(\ell)$ for block lengths $\ell=64,96,128$ and
bond dimensions $\chi=1024,1280,1536,1792,2048$ as a function of $\xi_1^{-1}(\chi)$.
Here, $\xi_1(\chi)$ is the longest nontrivial transfer-matrix length in the $q=1$ sector.
The solid lines are quadratic fits in $\xi_1^{-1}(\chi)$ at fixed
$\ell$; 
the filled symbols at $\xi_1^{-1}=0$ give
$m_{\perp,\infty}^2(\ell)$.
(d)~The subsequent block-length extrapolation $m_{\perp,\infty}^2(\ell)$, 
using a linear fit in $\ell^{-1}$ over $32\leq\ell\leq128$.
The colored symbols correspond to the block lengths shown in (c).
The diamonds in (a) and (d) mark the thermodynamic-limit estimates
0.947 and 0.949, respectively.
The gray bands in (b) and (d) show the conservative uncertainty interval $0.944\leq m_\perp^2\leq0.950$.}
\label{fig:transverse_magnetization_scaling}
\end{figure*}

\subsection{Squared transverse magnetization}
\label{subsec:ssb-criterion}

The spin-$1$ Hamiltonian $H_1$ in Eq.~\eqref{eq:model-spin-one} respects U(1) spin-rotation symmetry generated
by $M_z$, 
with 
\begin{equation}
M_\alpha=\sum_{i=1}^L S_i^\alpha \quad \left( \alpha=x,y,z \right).
\end{equation}
Let $|0\rangle_L$ denote a normalized lowest-energy
state in the $M_z=0$ sector, and let
$\langle\cdots\rangle_0$ denote expectation values in
this state $|0\rangle_L$.
Since $|0\rangle_L$ preserves U(1) symmetry,
we have $\langle M_x\rangle_0=\langle M_y\rangle_0=0$.
Instead, we consider the squared transverse
magnetization density
\begin{align}
m_\perp^2(L)
&=\frac{\langle M_x^2+M_y^2\rangle_0}{L^2}
=\frac{2\langle M_x^2\rangle_0}{L^2},
\label{eq:ssb-mperp-finite}
\end{align}
where the second equality follows from U(1) symmetry.

For the present model, a positive thermodynamic limit
\begin{align}
m_\perp^2
&\coloneqq\lim_{L\to\infty}m_\perp^2(L)>0
\label{eq:ssb-mperp-limit}
\end{align}
is sufficient for spontaneous U(1) symmetry breaking:
a nonzero transverse magnetization persists when an infinitesimal transverse field is removed after taking the thermodynamic limit,
as can be shown by the variational arguments~\cite{KaplanHorschVonderLinden-89,KomaTasaki-94,Tasaki-19}
(see Appendix~\ref{app:ssb-criterion} for a proof).
In the following, we test this criterion at the critical coupling
using both finite-size scaling and
finite-entanglement scaling.

\subsection{Finite-size scaling}
\label{subsec:ssb-fss}

We compute $m_\perp^2(L)$ using DMRG for periodic spin-$1$ chains in the lowest-energy state of the $M_z=0$ sector (see Appendix~\ref{app:ground-state-dmrg} for the convergence criteria).
The coupling is fixed at $g_1=-0.8258$, 
the critical
coupling estimated from subsystem Binder scaling in Eq.~\eqref{eq:binder-estimates}.
In Fig.~\ref{fig:transverse_magnetization_scaling}(a), we show numerically obtained $m_\perp^2(L)$
as a function of $1/L$.
For comparison, note that the ferromagnetic Heisenberg chain
has the exact result $m_\perp^2(L)=1+1/L$ in the ground state with $S_{\mathrm{tot}}=L$ and $M_z=0$.

We extrapolate our data using
\begin{align}
m_\perp^2(L)
&=m_\perp^2+\frac{a_1}{L}+\frac{a_2}{L^2},
\label{eq:ssb-fss}
\end{align}
where $m_\perp^2$, $a_1$, and $a_2$ are fitting
parameters.
The representative fit over $20\leq L\leq72$
gives $m_\perp^2\simeq0.9473$.
To examine the dependence on the fitting window,
we vary $L_{\min}$ from $20$ to $40$ while keeping $L_{\max}=72$ fixed.
The resulting estimates lie between $0.9468$ and $0.9473$, 
as shown in
Fig.~\ref{fig:transverse_magnetization_scaling}(b).
We also compare the quadratic extrapolation with linear
and cubic fits in $1/L$, keeping $L_{\min}=20$ and
$L_{\max}=72$ fixed.
The three extrapolated estimates lie between
$0.9468$ 
and $0.9489$.
Allowing conservatively for systematic uncertainty
associated with the fitting window and the
extrapolation form, we estimate
\begin{align}
m_\perp^2=0.947 \pm 0.003.
    \label{eq:m_perp_finitesize}
\end{align}
The extrapolated squared transverse magnetization
remains clearly positive,
providing numerical evidence for
spontaneous U(1) symmetry breaking.

\subsection{Finite-entanglement scaling}
\label{subsec:ssb-fes}

We further use the VUMPS algorithm~\cite{ZaunerStauber-18} to directly obtain an approximate ground state of the infinite chain (see Appendix~\ref{app:vumps-convergence} for the convergence criterion).
Instead of finite-size effects discussed in Sec.~\ref{subsec:ssb-fss}, 
a finite bond dimension $\chi$ restricts the entanglement that the matrix product state (MPS) can represent.
The critical ground state is then approximated,
and its observables must be estimated by extrapolating
$\chi\to\infty$ through finite-entanglement scaling.
We impose U(1) symmetry on the MPS
ansatz,
and then the corresponding transfer matrix decomposes into sectors labeled by the U(1) charge $q$.
For an operator $O_i$ satisfying $[M_z,O_i]=qO_i$,
the eigenvalues of the transfer matrix in the charge-$q$ sector determine the correlation lengths associated with the decaying
part of $\langle O_i^\dagger O_j\rangle_\chi$.
Here, $\langle\cdots\rangle_\chi$ denotes expectation values
in the infinite-chain MPS with bond dimension $\chi$.

For a block of $\ell$ consecutive sites, we define
the squared transverse magnetization density
\begin{align}
m_{\perp,\chi}^2(\ell)
&=\frac{1}{\ell^2}
\left\langle
\left(\sum_{i=1}^{\ell}S_i^x\right)^2
+\left(\sum_{i=1}^{\ell}S_i^y\right)^2
\right\rangle_\chi
\notag\\
&=\frac{1}{\ell^2}\sum_{i,j=1}^{\ell}
\langle S_i^-S_j^+\rangle_\chi,
\end{align}
where we use
$\langle S_i^z\rangle_\chi=0$ in the second equality for the states considered here.
Owing to $[M_z,S_i^+]=S_i^+$, the operator $S_i^+$ carries
U(1) charge $q=1$.
The transverse correlations hence probe the $q=1$
sector, making its correlation lengths natural variables
for finite-entanglement scaling of
$m_{\perp,\chi}^2(\ell)$.
We use the longest correlation length in this sector,
denoted by $\xi_1(\chi)$, as the scaling variable.
At the critical point, this length grows with $\chi$.
We extrapolate observables to $\xi_1^{-1}\to0$
to estimate their infinite-bond-dimension limits.

We first remove the finite-entanglement effects at each fixed block length and then extrapolate the block length to infinity:
\begin{align}
m_{\perp,\infty}^2(\ell)
&=\lim_{\chi\to\infty}m_{\perp,\chi}^2(\ell),\\
m_\perp^2
&=\lim_{\ell\to\infty}m_{\perp,\infty}^2(\ell)
=\lim_{\ell\to\infty}\lim_{\chi\to\infty}
m_{\perp,\chi}^2(\ell).
\end{align}
This order of limits removes the finite-$\chi$ cutoff before probing long-range order.
In Fig.~\ref{fig:transverse_magnetization_scaling}(c), we show
$m_{\perp,\chi}^2(\ell)$ for $\ell=64,96,128$.
For each fixed $\ell$, we perform the
finite-entanglement scaling using
\begin{align}
m_{\perp,\chi}^2(\ell)
&=m_{\perp,\infty}^2(\ell)+\frac{a_\ell}{\xi_1(\chi)}
+\frac{b_\ell}{\xi_1(\chi)^2}.
\end{align}
The intercept at $\xi_1^{-1}=0$ gives the
infinite-bond-dimension estimate $m_{\perp,\infty}^2(\ell)$.
We show the resulting block-length dependence in Fig.~\ref{fig:transverse_magnetization_scaling}(d).
We extrapolate these values using
\begin{align}
m_{\perp,\infty}^2(\ell)
&=m_\perp^2+\frac{c}{\ell}
\end{align}
over $32\leq\ell\leq128$.
This gives a nonzero limit 0.949, 
consistent with the finite-size scaling result $m_\perp^2=0.947 \pm 0.003$
in Eq.~\eqref{eq:m_perp_finitesize}.
Consequently, the two approaches independently support a finite squared transverse magnetization density in the thermodynamic limit and spontaneous U(1) symmetry breaking at the
critical point.

\section{Critical dynamics}
\label{sec:critical-dynamics}

\begin{figure}[t]
  \centering
  \includegraphics[width=\columnwidth]{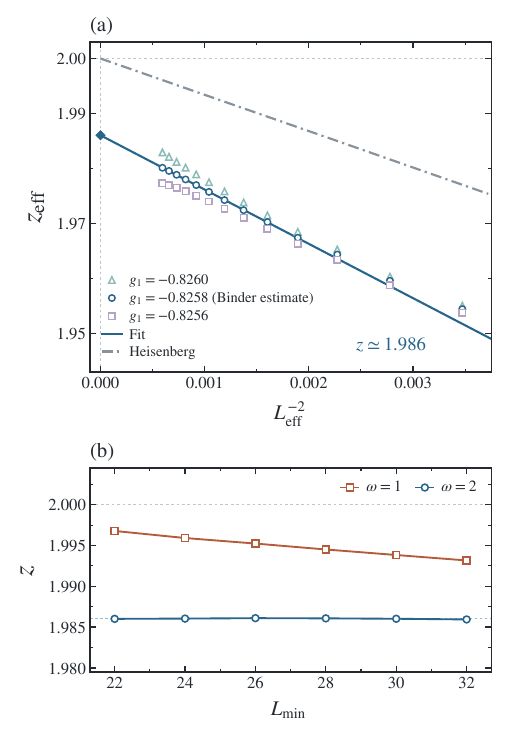}
\caption{Finite-size scaling of the lowest neutral excitation
gap in the periodic spin-$1$ chain ($J=g_2=1$, $M_z=0$).
(a)~Effective dynamical critical exponent $z_{\mathrm{eff}}$, 
defined in
Eq.~\eqref{eq:effective-z}, 
as a function of
$L_{\mathrm{eff}}^{-2}$, with
$L_{\mathrm{eff}}=\sqrt{L(L+2)}$.
The blue circles show the results at the Binder estimate
$g_1=-0.8258$; 
the light-colored triangles and squares show those at $g_1=-0.8260$ and $-0.8256$, respectively.
The solid line is a fit to
$z_{\mathrm{eff}}=z+A L_{\mathrm{eff}}^{-2}$ using gaps
for $24\leq L\leq42$, extended across the plotted range.
The filled diamond marks the extrapolated value
$z\simeq1.986$.
The gray dash-dotted line shows the effective dynamical critical exponent
obtained from the exact ferromagnetic Heisenberg gap
$4\sin^2(\pi/L)$.
(b)~Estimates of $z$ at $g_1=-0.8258$ from fits to
Eq.~\eqref{eq:effective-z-fit} with $\omega=1$ and $2$,
as functions of the minimum system size $L_{\min}$
included in each fit, with $L_{\max}=42$ fixed.
The lines connect successive estimates.
The gray horizontal dotted lines mark $z=2$;
the blue horizontal dotted line marks the
representative estimate $z\simeq1.986$ from the
$\omega=2$ fit with $L_{\min}=24$.}
  \label{fig:critical-dynamics}
\end{figure}

Our periodic spin-$1$ chain is
translation invariant, 
and its energy eigenstates can be
chosen to have momenta $k_n=2\pi n/L$ $(n=0,1,\cdots,L-1)$, 
with the lattice
spacing set to unity.
In the thermodynamic limit $L\to\infty$, momentum becomes continuous,
and the low-energy critical dispersion takes the form
\begin{equation}
    \omega(k)\propto |k|^z,
\end{equation}
where $z$ is the dynamical
critical exponent.
Within the $M_z=0$ sector, let $E_0(L)$ and $E_1(L)$
denote the lowest energies at $k=0$ and $k=\pm k_1 = \pm 2\pi/L$,
respectively.
The corresponding gap then scales as
\begin{align}
\Delta_1(L)
&=E_1(L)-E_0(L)
\sim C|k_1|^z\propto L^{-z},
\label{eq:neutral-gap}
\end{align}
for large $L$.
This scaling allows us to determine $z$ from the energy gap. 
At the Binder estimate $g_1=-0.8258$,  
we compute the lowest neutral excitation using DMRG
(see Appendix~\ref{app:excited-state-dmrg} for the convergence criteria).
We also confirm that the lowest neutral excitations form a $k=\pm2\pi/L$ doublet, 
using exact diagonalization ($L=10,12,14$) and DMRG ($L=16,20,24$).

We assume the finite-size scaling form
\begin{align}
\Delta_1(L)
&=L^{-z}\left(C_0+C_1L^{-\omega}+\cdots\right),
\label{eq:gap-scaling}
\end{align}
where $C_0$ and $C_1$ are nonuniversal amplitudes, 
and $\omega>0$ is the leading correction exponent.
To eliminate the overall amplitude $C_0$, we define an
effective exponent using adjacent even system sizes,
\begin{align}
z_{\mathrm{eff}}(L_{\mathrm{eff}})
&=-\frac{\ln[\Delta_1(L+2)/\Delta_1(L)]}
{\ln[(L+2)/L]}, \label{eq:effective-z} \\
L_{\mathrm{eff}}&=\sqrt{L(L+2)}.
\end{align}
Substituting Eq.~\eqref{eq:gap-scaling} and expanding
at large $L$ gives
\begin{align}
z_{\mathrm{eff}}(L_{\mathrm{eff}})
&=z+A L_{\mathrm{eff}}^{-\omega}
\label{eq:effective-z-fit}
\end{align}
to the first subleading order.
Hence, $z_{\mathrm{eff}}$ approaches $z$ as $L\to\infty$.

In Fig.~\ref{fig:critical-dynamics}(a),
we show
$z_{\mathrm{eff}}$ as a function of $L_{\mathrm{eff}}^{-2}$.
At the Binder estimate $g_1=-0.8258$, the numerical data appear as a straight line.
By contrast, at the slightly detuned couplings
$g_1=-0.8258\pm0.0002$, 
the data develop curvature
as $L$ increases, toward $L_{\mathrm{eff}}^{-2}=0$.
This behavior is consistent with the critical coupling obtained from the subsystem Binder scaling.
For reference, we also show the exact finite-size gap
$4\sin^2(\pi/L)$ of the ferromagnetic spin-$1$ Heisenberg
chain,
corresponding to $z=\omega=2$,
which deviates significantly from the results of our spin-$1$ chain.

In Fig.~\ref{fig:critical-dynamics}(b), we compare fits with
$\omega=1$ and $2$,
as the smallest size $L_{\min}$ used in each fit is varied from $22$ to $32$ at fixed $L_{\max}=42$.
The $\omega=2$ estimates vary by less than $2\times10^{-4}$,
whereas the $\omega=1$ estimates drift by $3.6\times10^{-3}$.
We thus adopt $\omega=2$,
and the representative $24\le L\le42$ fit yields $z\simeq1.986$.
This choice is also consistent with the Binder-ratio fits, which suggest $\omega\ge2$
as a single effective correction exponent
(see Appendix~\ref{app:binder-omega}).
We estimate the sensitivity to the coupling $g_1$ by
repeating the same fits at $g_1=-0.8260$ and $-0.8256$,
the endpoints of the Binder estimate.
For the representative fitting window $24\le L\le42$, these yield
$z_{\mathrm{fit}}=1.990$ and $1.983$, respectively.
Taking the envelope over both coupling endpoints and all
the fitting windows above, we obtain
\begin{align}
    z=1.986\pm0.005.
        \label{eq:numerical-z}
\end{align}
We emphasize that the uncertainty quantifies sensitivity to the coupling and
fitting window within the $\omega=2$ extrapolation,
and is not a statistical confidence interval.
These results support $z<2$, consistent with
the perturbative prediction $z\simeq1.992$ in Eq.~\eqref{eq:z_perturbative_RG}.

\begin{figure*}[t]
  \centering
  \includegraphics[width=\textwidth]{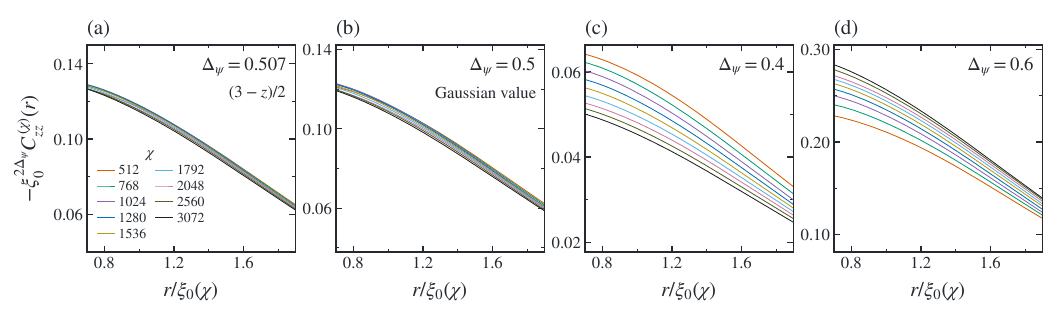}
\caption{
Finite-entanglement scaling of longitudinal correlations
in the spin-$1$ chain at $g_1=-0.8258$.
We plot $-\xi_0(\chi)^{2\Delta_\psi}C_{zz}^{(\chi)}(r)$
against $r/\xi_0(\chi)$ for
(a) $\Delta_\psi=0.507$ from
Eq.~\eqref{eq:longitudinal-dimension},
(b) the Gaussian value $\Delta_\psi=0.5$,
and (c), (d) the comparison values $\Delta_\psi=0.4$ and $\Delta_\psi=0.6$.
Here, $C_{zz}^{(\chi)}(r)$ is defined in
Eq.~\eqref{eq:longitudinal-correlation-chi},
and $\xi_0(\chi)$ is the longest nontrivial
neutral-sector transfer-matrix correlation length.
The solid lines connect all calculated points, and the colors denote the bond dimensions listed in (a),
with the same assignments in all panels.}
\label{fig:longitudinal-correlations}
\end{figure*}

\section{Longitudinal correlations}
    	\label{sec:longitudinal}

\begin{table*}[t]
\caption{
Scaling dimensions $\Delta_\psi$ obtained by minimizing
$R(\Delta_\psi)$ in Eq.~\eqref{eq:longitudinal-collapse-spread}.
The rows specify $[X_{\min},X_{\max}]$ with the scaled distance $X=r/\xi_0(\chi)$, while the two upper
header rows give $\chi_{\min}$ and $\chi_{\max}$ for each column.
The table shows nine representative $X$ windows and ten
representative $\chi$ windows.
The bold entries mark the smallest and largest tabulated estimates.}
\label{tab:longitudinal-collapse-dimension}
\centering
\footnotesize
\setlength{\tabcolsep}{3.5pt}
\renewcommand{\arraystretch}{1.2}
\begin{tabular}{@{}cc*{10}{r}@{}}
\hline\hline
\multicolumn{2}{c}{$\chi_{\min}$}
& 512 & 512 & 512 & 512
& 1024 & 1024 & 1024 & 1536 & 1536 & 2048 \\
\multicolumn{2}{c}{$\chi_{\max}$}
& 1024 & 1536 & 2048 & 3072
& 1536 & 2048 & 3072 & 2048 & 3072 & 3072 \\
\hline
$X_{\min}$ & $X_{\max}$ & \multicolumn{10}{c}{$\Delta_\psi$} \\
\hline
0.3 & 1.1 & \textbf{0.4820861} & 0.4956622 & 0.5055785 & 0.5087380 & 0.5130591 & 0.5206573 & 0.5182020 & 0.5295223 & 0.5181536 & 0.5065655 \\
0.3 & 1.5 & 0.4905206 & 0.5011944 & 0.5093006 & 0.5129082 & 0.5149660 & 0.5215327 & 0.5214425 & 0.5293463 & 0.5233418 & 0.5174221 \\
0.3 & 1.9 & 0.4983695 & 0.5059175 & 0.5120201 & 0.5155829 & 0.5154507 & 0.5209816 & 0.5225771 & 0.5277099 & 0.5257764 & 0.5240756 \\
\noalign{\vskip3pt}
0.7 & 1.1 & 0.4895562 & 0.5016158 & 0.5109537 & 0.5151776 & 0.5171589 & 0.5249680 & 0.5250227 & 0.5342094 & 0.5273843 & 0.5205747 \\
0.7 & 1.5 & 0.4984607 & 0.5069142 & 0.5138312 & 0.5182128 & 0.5179847 & 0.5241449 & 0.5265122 & 0.5316297 & 0.5306046 & 0.5298752 \\
0.7 & 1.9 & 0.5062724 & 0.5112830 & 0.5159312 & 0.5200073 & 0.5176371 & 0.5225553 & 0.5263614 & 0.5287078 & 0.5314669 & 0.5346171 \\
\noalign{\vskip3pt}
0.9 & 1.1 & 0.4940201 & 0.5044820 & 0.5127405 & 0.5172314 & 0.5182031 & 0.5251740 & 0.5265869 & 0.5334536 & 0.5299738 & 0.5267097 \\
0.9 & 1.5 & 0.5029064 & 0.5096209 & 0.5153761 & 0.5199057 & 0.5186120 & 0.5239446 & 0.5275391 & 0.5305230 & 0.5325559 & 0.5350274 \\
0.9 & 1.9 & 0.5104966 & 0.5137684 & 0.5172706 & 0.5213794 & 0.5179476 & 0.5221221 & 0.5269538 & 0.5274638 & 0.5328205 & \textbf{0.5386663} \\
\hline\hline
\end{tabular}
\end{table*}

We now investigate the scaling dimension $\Delta_\psi$
through the longitudinal correlations.
In one spatial dimension, combining the scaling relation
in Eq.~\eqref{eq:field-scaling-relations} with the
excitation-gap estimate $z\simeq1.986$ in
Eq.~\eqref{eq:numerical-z} gives
\begin{align}
\Delta_\psi=\frac{3-z}{2}\simeq0.507,
\label{eq:longitudinal-dimension}
\end{align}
slightly above the Gaussian value $1/2$.
Determining whether $\Delta_\psi$ exceeds the Gaussian
value provides an independent test of $z<2$.

The coherent-state correspondence 
identifies $\psi$ with longitudinal spin fluctuations.
In the infinite spin-$1$ chain, we consider
the connected equal-time correlation
\begin{align}
C_{zz}(r)
&=\langle S_i^zS_{i+r}^z\rangle
-\langle S_i^z\rangle\langle S_{i+r}^z\rangle,
\label{eq:longitudinal-correlation}
\end{align}
where the expectation values are taken in the
infinite-chain ground state.
At the interacting critical point, its long-distance behavior should follow
\begin{align}
C_{zz}(r)
&\simeq A r^{-2\Delta_\psi},
\label{eq:longitudinal-power-law}
\end{align}
with a nonuniversal amplitude $A$.
We use the infinite-chain ground-state approximations
obtained by the VUMPS algorithm in
Sec.~\ref{subsec:ssb-fes} at the critical coupling
$g_1=-0.8258$,
and introduce
\begin{align}
C_{zz}^{(\chi)}(r)
&=\langle S_i^zS_{i+r}^z\rangle_\chi
-\langle S_i^z\rangle_\chi
 \langle S_{i+r}^z\rangle_\chi.
\label{eq:longitudinal-correlation-chi}
\end{align}
Since our MPS ansatz is restricted to zero magnetization
density, 
we have $\langle S_i^z\rangle_\chi=0$ and hence $C_{zz}^{(\chi)}(r) = \langle S_i^zS_{i+r}^z\rangle_\chi$.
As discussed in Sec.~\ref{subsec:ssb-fes},
a finite bond dimension $\chi$ introduces a finite
correlation length, leading to exponential decay
of longitudinal correlations at large separations.
Owing to $[M_z,S_i^z]=0$, 
longitudinal correlations
probe the neutral ($q=0$) sector of the transfer
matrix.
Accordingly, we use the longest nontrivial correlation
length in the neutral sector, $\xi_0(\chi)$, to form the
scaling variable $r/\xi_0(\chi)$.

Matching the critical power law in
Eq.~\eqref{eq:longitudinal-power-law}
at distances $1\ll r\ll\xi_0(\chi)$ motivates
the finite-entanglement scaling form
\begin{align}
C_{zz}^{(\chi)}(r)
&\simeq \xi_0(\chi)^{-2\Delta_\psi}
\mathcal F\!\left(\frac{r}{\xi_0(\chi)}\right),
\label{eq:longitudinal-fes}
\end{align}
satisfying
$\mathcal F(X)\simeq A X^{-2\Delta_\psi}$
for $0<X\ll1$.
Equation~\eqref{eq:longitudinal-fes} predicts
that $\xi_0(\chi)^{2\Delta_\psi}C_{zz}^{(\chi)}(r)$
collapses onto a common function of $X= r/\xi_0(\chi)$.
In Fig.~\ref{fig:longitudinal-correlations}, we compare
the rescaled correlations for
(a)~$\Delta_\psi=0.507$ from
Eq.~\eqref{eq:longitudinal-dimension},
(b)~the Gaussian value $\Delta_\psi=0.5$,
and (c, d)~the comparison values $\Delta_\psi=0.4$ and $\Delta_\psi=0.6$. 
The nine bond dimensions span the range $\chi=512$--$3072$, 
corresponding to $\xi_0(\chi)\simeq23.9$--$76.2$.
Both $\Delta_\psi=0.507$ and $\Delta_\psi=0.5$ give approximate collapse,
whereas $\Delta_\psi=0.4$ and $\Delta_\psi=0.6$ produce visibly larger
bond-dimension dependence.
The available correlation lengths do not provide
a sufficiently broad interval $1\ll r\ll\xi_0$
for a precise power-law fit to
Eq.~\eqref{eq:longitudinal-power-law}.

To quantify the difference between the collapses,
we measure the relative root-mean-square spread
of the rescaled curves about their mean.
For a trial scaling dimension $\Delta_{\psi}$, we introduce
$F_\chi^{(\Delta_{\psi})}(X)
=-\xi_0(\chi)^{2\Delta_\psi}
C_{zz}^{(\chi)}(X\xi_0(\chi))$ and define
\begin{align}
R(\Delta_\psi)
&=\left\langle
\left[
\frac{F_\chi^{(\Delta_\psi)}(X)}
{\overline{F}^{(\Delta_\psi)}(X)}-1
\right]^2
\right\rangle_{X,\chi}^{1/2},
\label{eq:longitudinal-collapse-spread}
\end{align}
where $\overline{F}^{(\Delta_\psi)}(X)$ is the average over the included bond dimensions,
and the bracket $\langle\cdots\rangle_{X,\chi}$ denotes an equally weighted average over these bond dimensions and $241$ uniformly spaced points in $0.7\leq X\leq1.9$.
We then have
$R(0.4)=0.089294\cdots$,
$R(0.5)=0.0157764\cdots$,
$R(0.507)=0.011121\cdots$,
and $R(0.6)=0.058523\cdots$.
The cases of $\Delta_{\psi} = 0.4$ and $\Delta_{\psi} = 0.6$ give substantially poorer collapse.
Compared with the Gaussian value $\Delta_{\psi} = 1/2$,
the scaling dimension $\Delta_{\psi} = 0.507$ inferred from $z$ in
Eq.~\eqref{eq:numerical-z} reduces
the spread by 
\begin{align}
    \frac{R(0.5)-R(0.507)}{R(0.5)}\simeq 29.5\%. 
\end{align}
This improvement persists when $R(\Delta_\psi)$ is evaluated using only the larger bond dimensions.

We proceed to a direct estimate of $\Delta_\psi$
by minimizing $R(\Delta_\psi)$ in
Eq.~\eqref{eq:longitudinal-collapse-spread}.
Varying both the scaled-distance window
$X_{\min}\le X\le X_{\max}$ and the bond-dimension window
$\chi_{\min}\le \chi\le \chi_{\max}$,
we provide the resulting estimates in
Table~\ref{tab:longitudinal-collapse-dimension}.
The fitted values lie near $1/2$, with some windows
yielding values below $1/2$.
However, a good collapse across all the included values of $\chi$
and $X$ does not necessarily yield a reliable estimate
of $\Delta_\psi$.
In particular, at short separations, lattice-scale corrections
to Eq.~\eqref{eq:longitudinal-fes} can be partially
absorbed into the fitted exponent, biasing its value.
Owing to $r=X\xi_0(\chi)$, this concern is relevant at small $X$ and small $\chi$.

To suppress short-distance effects, we employ
$X_{\min}\in\{0.5,0.7,0.9\}$ and
$\chi_{\min}\in\{512,768,1024,1280,1536\}$ so that
$X_{\min}\xi_0(\chi_{\min})\gtrsim12$.
We also use $X_{\max}\in\{1.1,1.5,1.9\}$ to examine
sensitivity to the upper distance cutoff, without extending
far into the exponentially decaying tail.
We fix $\chi_{\max}=3072$
to retain the data with the largest available correlation
length.
We adopt all $45$ combinations of these windows
and evaluate the average in Eq.~\eqref{eq:longitudinal-collapse-spread} over the selected windows for each fit,
with equal weight for each
bond dimension and $241$ uniformly spaced points in
$X_{\min}\le X\le X_{\max}$.

The resulting estimates span the range $0.512\lesssim\Delta_\psi\lesssim0.533$,
and are summarized as
\begin{align}
\Delta_\psi = 0.5225 \pm 0.0105,
    \label{eq:longitudinal-collapse-envelope}
\end{align}
which describes the dependence on the fitting windows,
rather than a statistical confidence interval.
Although this range does not agree quantitatively with
the estimate $\Delta_\psi\simeq0.507$ obtained from the
gap estimate of $z$,
all $45$ fits favor $\Delta_\psi>1/2$ and thus support $z<2$.
Through the scaling relation in Eq.~\eqref{eq:field-scaling-relations} for one spatial dimension $d=1$, the range in
Eq.~\eqref{eq:longitudinal-collapse-envelope} corresponds to
\begin{align}
z = 1.955 \pm 0.021.
    \label{eq:longitudinal-collapse-z}
\end{align}
We again emphasize that the uncertainty does not describe a statistical confidence interval but the dependence on the fitting windows.

It is also noteworthy that the spontaneous longitudinal magnetization density $m_z = \sum_{i=1}^{L}S_i^z/L$ exhibits the critical behavior
\begin{equation}
    m_z \propto \left( g_{1c} - g_1 \right)^{\beta} \quad \left( g_1 < g_{1c} \right).
\end{equation}
Although we have not calculated the critical exponent $\beta$ directly, 
it can be obtained from the other critical exponents through the scaling relations.
Specifically, using $\nu$ in Eq.~\eqref{eq:binder-estimates} and $z$ in Eq.~\eqref{eq:numerical-z}, we obtain
\begin{equation}
    \beta 
    = \nu \Delta_\psi 
    = \frac{\nu \left( 3-z \right)}{2} = 0.304 \pm 0.012.
        \label{eq:numerical-beta}
\end{equation}
On the other hand, using $\nu$ in Eq.~\eqref{eq:binder-estimates} and $\Delta_\psi$ in Eq.~\eqref{eq:longitudinal-collapse-envelope}, we obtain
\begin{equation}
    \beta 
    = \nu \Delta_\psi 
    = 0.313 \pm 0.018.
        \label{eq:numerical-beta-longitudinal}
\end{equation}

\section{spin-1/2 model}
    \label{sec:spin_half}

\begin{figure*}[t]
  \centering
  \includegraphics[width=\textwidth]{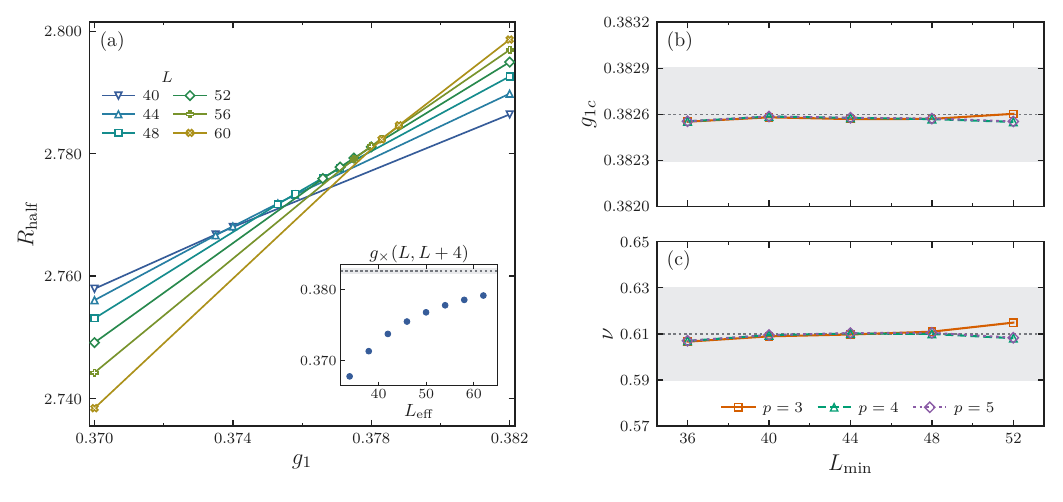}
\caption{Subsystem Binder ratio and finite-size scaling for the spin-$1/2$ chain.
(a)~Half-chain Binder ratio $R_{\text{half}}=\langle (M_{z, \text{half}})^4\rangle/\langle (M_{z, \text{half}})^2\rangle^2$
as a function of the anisotropy parameter $g_1$ for the indicated system sizes $L$,
with $M_{z, \text{half}}=\sum_{i=1}^{L/2}S_i^z$. The lines connect the measured points.
Inset: crossing points $g_\times(L,L+4)$ of the Binder-ratio curves for system sizes $L$ and $L+4$,
plotted against $L_{\mathrm{eff}}=\sqrt{L(L+4)}$.
The crossing points are estimated by linearly interpolating the Binder ratios
between $g_1$ values sampled for both sizes.
(b), (c)~Estimates of the critical coupling $g_{1c}$ and correlation-length critical exponent $\nu$ as functions of the minimum system size $L_{\min}$ included in each fit.
The fitting form is
$R_{\text{half}}=F_p(x)+L^{-\omega}(b_0+b_1x)$ with
$x=(g_1-g_{1c})L^{1/\nu}$.
Here, $F_p$ is a polynomial of degree $p=3,4,5$.
All coefficients, 
as well as $g_{1c}$, $\nu$, and $\omega$, are fitted. 
The fitting window is
$|g_1-0.3826|\leq0.01$.
The lines connect the successive estimates.
The shaded bands indicate the adopted systematic uncertainty intervals,
$g_{1c}=0.3826\pm0.0003$ and $\nu=0.61\pm0.02$, rather than statistical confidence
intervals for individual fits.
The dashed horizontal lines mark their central values.}
    \label{fig:spinhalf-binder}
\end{figure*}

\begin{figure*}[t]
  \centering
  \includegraphics[width=\textwidth]{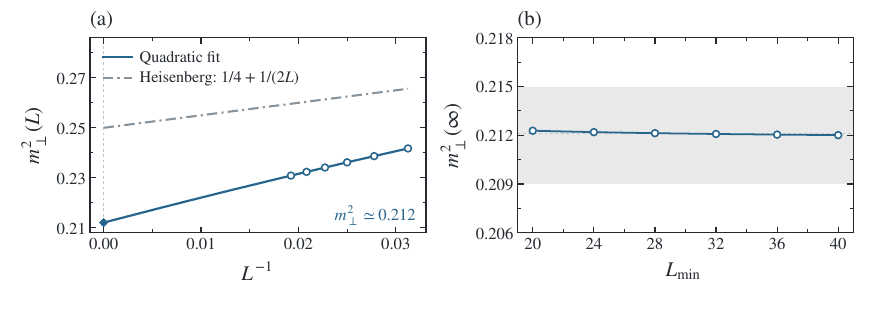}
  \caption{Finite-size scaling of the squared transverse
  magnetization $m_\perp^2(L)$ in the spin-$1/2$ chain
  ($J=g_4=1$, $g_1=0.3826$).
  (a)~DMRG results of $m_\perp^2(L)$ in the $M_z=0$ sector
  as a function of $L^{-1}$.
  The solid line is a quadratic fit in $L^{-1}$ over
  $32\leq L\leq52$.
  The gray dash-dotted line shows the exact ferromagnetic
  Heisenberg result $m_\perp^2(L)=1/4+1/(2L)$.
  The diamond marks the thermodynamic-limit estimate $0.212$.
  (b)~Extrapolated $m_\perp^2(\infty)$ as a function of the
  minimum system size $L_{\min}$ included in each fit,
  with $L_{\min}= 20, 24,28,32,36,40$ and $L_{\max}=52$ fixed.
  The horizontal dotted line marks the representative estimate
  $0.2121$ from the fit with $L_{\min}=32$.
  The gray band shows the conservative uncertainty interval
  $0.209\leq m_\perp^2\leq0.215$.}
  \label{fig:spinhalf-transverse}
\end{figure*}

We now investigate the quantum spin-$1/2$ chain in
Eq.~\eqref{eq:model-spin-half} with $J=g_4=1$.
As discussed in Sec.~\ref{subsec:coherent-state-description},
varying $g_1$ changes the coefficients of both $\psi^2$ and
$(\partial_x\psi)^2$ in the coherent-state Lagrangian
in Eq.~\eqref{eq:lagrangian-spin-half}.
This differs from the spin-$1$ case, where $g_1$ changes only
the quadratic potential in Eq.~\eqref{eq:lagrangian-spin-one}
at the bare level.
Nonetheless, we numerically find that tuning $g_1$ alone
realizes a continuous phase transition with spontaneous
U(1) symmetry breaking at criticality.
We establish these results using the same analyses as for
the spin-$1$ chain in Secs.~\ref{sec:binder} and~\ref{sec:SSB}.

For periodic chains with an even number of sites,
we use DMRG to obtain the lowest-energy state in the
$M_z=0$ sector
(see Appendix~\ref{app:ground-state-dmrg} for the convergence criteria).
We then compute the half-chain Binder ratio defined in
Eq.~\eqref{eq:half-block-binder} as a function of $g_1$.
As shown in Fig.~\ref{fig:spinhalf-binder}(a), 
the curves
for different system sizes cross within a narrow coupling
interval.
This behavior supports a single critical point accessible
by tuning $g_1$ alone, 
although the crossing positions shift systematically with increasing $L$.
We account for these finite-size corrections using the
scaling form of Eqs.~\eqref{eq:binder-fss} and \eqref{eq:binder-fss-poly}.
In Figs.~\ref{fig:spinhalf-binder}(b) and (c), 
we show the
estimated $g_{1c}$ and $\nu$ for different minimum system
sizes $L_{\min}$ and polynomial degrees $p=3,4,5$,
with $L_{\max}=68$ fixed.
From the dependence on these fitting choices, we obtain
\begin{align}
g_{1c}&=0.3826\pm0.0003,
&\nu&=0.61\pm0.02,
\label{eq:spinhalf-binder-estimates}
\end{align}
where the uncertainties represent conservative systematic ranges associated with these variations within the adopted scaling form, rather than statistical errors of individual fits.
The estimated $\nu$ is consistent with the spin-$1$ result
$\nu=0.60\pm0.02$ in Eq.~\eqref{eq:binder-estimates},
and larger than the Gaussian value $\nu_{\rm G} = 1/2$ in Eq.~\eqref{eq:nu_Gaussian}.
These results quantitatively support a common interacting critical description of the $S=1/2$ and $S=1$ chains.

At the Binder estimate $g_1=0.3826$, we further evaluate the
squared transverse magnetization density $m_\perp^2(L)$
defined in Eq.~\eqref{eq:ssb-mperp-finite}, in the
lowest-energy state of the $M_z=0$ sector.
A nonzero thermodynamic limit of this quantity is sufficient for spontaneous U(1) symmetry breaking, 
as explained in Appendix~\ref{app:ssb-criterion}.
For comparison, the ferromagnetic spin-$1/2$ Heisenberg
chain has the exact result
$m_\perp^2(L)=1/4+1/(2L)$ in the ground state with $M_z=0$.
Following the spin-$1$ analysis in
Sec.~\ref{subsec:ssb-fss}, we use the same finite-size
scaling form as in Eq.~\eqref{eq:ssb-fss},
\begin{align}
m_\perp^2(L)
&=m_\perp^2+\frac{a_1}{L}+\frac{a_2}{L^2},
\label{eq:spinhalf-ssb-fss}
\end{align}
where $m_\perp^2$, $a_1$, and $a_2$ are fitting parameters.
As shown in
Fig.~\ref{fig:spinhalf-transverse}(a),
the representative quadratic fit over $32\leq L\leq52$
gives $m_\perp^2\simeq0.2121$.
To examine the dependence on the fitting window,
we vary $L_{\min}=20, 24,28,32,36,40$,
while keeping
$L_{\max}=52$ fixed.
The resulting estimates, shown in Fig.~\ref{fig:spinhalf-transverse}(b), lie between 
$m_\perp^2=0.2120$ and $m_\perp^2=0.2123$, 
remaining close to the
representative estimate.
We also compare the quadratic extrapolation with linear
and cubic fits in $1/L$, keeping $L_{\min}=32$ and
$L_{\max}=52$ fixed.
The three extrapolated estimates lie between
$m_\perp^2=0.2118$ 
and $m_\perp^2=0.2136$. 
Allowing conservatively for the dependence on the
fitting window and extrapolation form at this fixed
coupling, we obtain
\begin{align}
m_\perp^2&=0.212\pm0.003.
\label{eq:spinhalf-mperp}
\end{align}
The extrapolated squared transverse magnetization
remains clearly positive across these fits,
providing numerical evidence for spontaneous U(1) symmetry breaking at the estimated critical point in the spin-$1/2$ chain.

\section{Frustration freeness}
    \label{sec:frustration-free}
    
A Hamiltonian is frustration-free if all local terms in a given decomposition can be minimized simultaneously,
enabling spontaneous U(1) symmetry breaking in one-dimensional quantum ground states~\cite{Watanabe-24}.
Importantly, the interacting critical behavior found in this work, especially $z < 2$, imposes constraints on frustration-free realizations of the present universality class.

We first clarify the general microscopic properties expected for lattice realizations of this universality class.
The Hamiltonian should have finite-range interactions and U(1) symmetry, 
corresponding to the locality and U(1) shift symmetry of the field theory.
Let $Q=\sum_i O_i$ be the generator of U(1) symmetry, 
where $O_i$ is supported on site $i$ and $[H,Q]=0$.
Consider a U(1)-charge sector containing a ground state of the full Hamiltonian.
The lowest-energy state in this sector is expected to be unique,
which is consistent with the quantization of the zero mode of the U(1) phase field, 
as discussed in Ref.~\cite{Nahum-25}.
Furthermore, the local operator $O_i$ should correspond to the longitudinal critical field $\psi$.
Its connected correlations in this ground state decay algebraically as
\begin{align}
\langle O_iO_{i+r}\rangle -\langle O_i\rangle\langle O_{i+r}\rangle
&\propto r^{-2\Delta_\psi}
\quad \left( \Delta_\psi>0 \right).
\end{align}

For the spin-chain models in Sec.~\ref{sec:Models},
the uniqueness of the lowest-energy state within each U(1)-charge sector can be established explicitly.
Their Hamiltonians take the form
\begin{equation}
H=-\frac{J}{2}\sum_i
\left(S_i^+S_{i+1}^-+S_i^-S_{i+1}^+\right)
+V(\{S_i^z\})
\quad \left( J>0 \right),
\end{equation}
where $V$ is a polynomial in the local spin operators $\{S_i^z\}$ and is diagonal in the $S^z$ product basis.
In this basis, all off-diagonal matrix elements of $H$ are nonpositive, 
and any two basis states within the same U(1)-charge sector are connected by successive nearest-neighbor spin exchanges.
The Perron-Frobenius theorem then guarantees that the lowest-energy state in each U(1)-charge sector is unique 
(see Appendix~D2 of Ref.~\cite{Masaoka-25} for details).

Let $\Delta(L)$ denote the lowest positive excitation gap within a U(1)-charge sector containing a ground state of the full Hamiltonian.
If this gap scales with the system size $L$ as
\begin{align}
\Delta(L)\propto L^{-z},
\end{align}
a frustration-free Hamiltonian satisfying the above properties requires $z\ge2$
(see Appendix~D1 of Ref.~\cite{Masaoka-25} for details).
Our result $z<2$ therefore rules out such frustration-free realizations of the present universality class.

As a consistency check, we show that the spin-$1$ model in Eq.~\eqref{eq:model-spin-one} is not frustration-free with respect to a natural nearest-neighbor decomposition.
Distributing each on-site anisotropy term equally between the adjacent bonds, 
we have $H_1=\sum_i h_i$, with
\begin{align}
h_i
&=-J\mathbf S_i\cdot\mathbf S_{i+1}
+\frac{g_1}{2}
\left[(S_i^z)^2+(S_{i+1}^z)^2\right]
\notag\\
&\qquad\qquad
+g_2(S_i^z)^2(S_{i+1}^z)^2.
\end{align}
For $J=g_2=1$ and $g_1=-0.8258$, the minimum eigenvalue of each bond term is exactly
$E_{\min}(h_i)=g_1/2-J=-1.4129$.
Exact diagonalization of the two-bond operator gives
\begin{align}
E_{\min}(h_i+h_{i+1})
&\simeq -2.461453,
\end{align}
which is greater than the sum of the individual minima:
\begin{align}
E_{\min}(h_i+h_{i+1})-2E_{\min}(h_i)
&\simeq 0.364347>0.
\end{align}
Hence, the adjacent terms $h_i$ and $h_{i+1}$ cannot be minimized simultaneously, 
demonstrating that this decomposition is not frustration-free.
An analogous calculation confirms that our critical spin-$1/2$ chain in Eq.~\eqref{eq:model-spin-half} is also not frustration-free with respect to the natural four-site decomposition with
\begin{equation}
h_i
=-J\mathbf S_i\cdot\mathbf S_{i+1}
+g_1 S_i^z S_{i+1}^z +g_4 S_i^z S_{i+1}^z S_{i+2}^z S_{i+3}^z.
\end{equation}
It should be noted that these calculations concern only particular choices of local decomposition and do not, by themselves, exclude the possibility of other frustration-free decompositions.
Nevertheless, as described above, $z < 2$ generally rules out arbitrary frustration-free realizations of the present universality class.

\section{Discussion}
    \label{sec:conclusion}

We have investigated the U(1)-symmetric quantum spin-$1$ chain in Eq.~\eqref{eq:model-spin-one} and spin-$1/2$ chain in Eq.~\eqref{eq:model-spin-half}, 
and demonstrated their spontaneous U(1) symmetry breaking even in one spatial dimension.
We have numerically determined the critical exponent of the correlation length to be $\nu = 0.60 \pm 0.02$ [Eq.~\eqref{eq:binder-estimates}] for the spin-$1$ chain and $\nu = 0.61 \pm 0.02$ [Eq.~\eqref{eq:spinhalf-binder-estimates}] for the spin-$1/2$ chain,
both of which are different from the Gaussian value $1/2$ but compatible with the perturbative RG values.
We have also estimated the dynamical critical exponent to be $z = 1.986 \pm 0.005$ [Eq.~\eqref{eq:numerical-z}] from the lowest neutral excitation gap and $z = 1.955 \pm 0.021$ [Eq.~\eqref{eq:longitudinal-collapse-z}] from the longitudinal spin correlations,
with both estimates showing a deviation from the Gaussian value $2$.

The small but significant deviation of $z$ from the Gaussian value $z_{\rm G} = 2$ has an important implication for continuous symmetry breaking.
As discussed in Sec.~\ref{sec:frustration-free},
since frustration-free Hamiltonians require $z \geq 2$~\cite{Masaoka-25},
our estimate $z < 2$ implies an obstruction to frustration-free realizations of this critical theory.
Continuous symmetry breaking in one dimension is thus not exhausted by frustration-free constructions,
showing diverse mechanisms for circumventing the Hohenberg-Mermin-Wagner theorem in quantum many-body systems.

In $d=1$ dimension, 
after introducing an auxiliary field $h$ through $\psi = \nabla h$,
the action $S$ in Eq.~\eqref{eq:action} becomes quadratic in $\nabla \theta$,
which can then be integrated out,
leading to the Lifshitz-like theory~\cite{Grinstein-81, Henley-97, Moessner-01, Vishwanath-04, Henley-04, Fradkin-04, Ardonne-04, Yang-04, Radzihovsky-24, Nahum-25}
\begin{equation}
    S = \int d\tau\,dx\,\biggl[ \frac{\left( \partial_\tau h \right)^2}{2D_\theta} + \frac{D_\psi}{2} \left( \nabla^2 h \right)^2 + \lambda_2 \left( \nabla h \right)^2 + \lambda_4 \left( \nabla h \right)^4 \biggr].
        \label{eq:action_classical}
\end{equation}
Notably, this classical field theory has a microscopic realization in the nonequilibrium Ising lattice gas,
for which Monte Carlo simulations yield the critical exponents $\nu =0.62 \pm 0.03$~\cite{Praestgaard-00} and $\nu = 0.61 \pm 0.03$~\cite{Caracciolo-05}.
These Monte Carlo estimates are consistent with Eqs.~\eqref{eq:binder-estimates} and \eqref{eq:spinhalf-binder-estimates}, 
obtained independently for the quantum spin chains through DMRG,
supporting the same universality class.
Furthermore, these classical Monte Carlo simulations yield the anomalous dimensions $\eta = 0.13 \pm 0.04$~\cite{Praestgaard-00} and $\eta = 0.07 \pm 0.01$~\cite{Caracciolo-05},
corresponding to the effective dynamical critical exponents $z = 2-\eta/2 = 1.94 \pm 0.02$ and $z = 1.965 \pm 0.005$, respectively.
These values differ from Eq.~\eqref{eq:numerical-z} but are consistent with Eq.~\eqref{eq:longitudinal-collapse-z}.
It is also worth noting that the field theory in Eq.~\eqref{eq:action_classical} arises in ferromagnetic transitions in one-dimensional itinerant electron systems~\cite{Yang-04}.

Our numerical estimate of the critical point in Eq.~\eqref{eq:binder-estimates} is consistent with $g_{c} = -0.826 \pm 0.005$ reported in Ref.~\cite{FloresCalderon-25} ($g_c$ in Ref.~\cite{FloresCalderon-25} corresponds to $g_{1c}$ in our work).
Our estimates of the dynamical critical exponent in Eqs.~\eqref{eq:numerical-z} and \eqref{eq:longitudinal-collapse-z} differ substantially from $z = 1.50 \pm 0.04$ in Ref.~\cite{FloresCalderon-25}.
Additionally, Ref.~\cite{FloresCalderon-25} also reported $\beta = 0.30 \pm 0.01$, 
which is compatible with Eqs.~\eqref{eq:numerical-beta} and \eqref{eq:numerical-beta-longitudinal}.
Combining this value of $\beta$ with their estimate of $z$ through the scaling relation gives $\nu = 2\beta/\left( 3-z \right) = 0.40 \pm 0.02$,
which is incompatible with our estimate in Eq.~\eqref{eq:binder-estimates}.

\smallskip
{\it Note added}.---After the completion of this work, we became aware of recent related works~\cite{Chepiga-26, Andriyakhina-26}.

\begingroup
\renewcommand{\addcontentsline}[3]{}
\begin{acknowledgments}
We thank Rintaro Masaoka, Ian McCulloch, and Seishiro Ono 
for helpful discussion.
M.S. is supported by FoPM, WINGS Program, the University of Tokyo.
K.K. is supported by JSPS KAKENHI Grants 
No.~JP26H02015, No.~JP26K06970, and No.~JP26K17046, 
and JST FOREST Program Grant No.~JPMJFR256P.
OpenAI's ChatGPT and Codex (GPT-5.5, GPT-5.6 Sol, and GPT-6 Astra) were used to assist with code development and debugging, and to check selected analytical steps. 
All outputs were critically reviewed and independently verified through analytical derivations and numerical tests.
\end{acknowledgments}
\endgroup

\appendix

\section{Further microscopic realizations}
    \label{app:general-spin-models}

We discuss a broader family of microscopic
spin chains that can realize the interacting critical theory in Eq.~\eqref{eq:action}. 
We consider Hamiltonians of the form
\begin{align}
H_S
&=-J\sum_i\mathbf S_i\cdot\mathbf S_{i+1}
+V(\{S_i^z\}),
\label{eq:app-general-H}
\end{align}
where $J>0$ is the ferromagnetic exchange coupling,
$\mathbf S_i$ is a spin-$S$ operator at site $i$,
and $V$ is a finite-range longitudinal interaction invariant under the global spin reversal $S_i^z\mapsto-S_i^z$.
Throughout this appendix, we impose periodic
boundary conditions and set the lattice spacing to unity.
We first derive the coherent-state expectation values of arbitrary powers of $S^z$. 
We then present several candidate microscopic realizations, 
including generalizations of the spin-$1/2$ model in Eq.~\eqref{eq:model-spin-half} and the spin-$1$ model in Eq.~\eqref{eq:model-spin-one}.
The resulting continuum actions share the same local structure as Eq.~\eqref{eq:action}, 
although further analysis
is needed to establish the proposed critical behavior
in these additional models.

\subsection{Coherent-state expectation values}

For an arbitrary nonnegative integer $r$, we introduce
\begin{align}
\mathcal M_r^{(S)}(\psi)
&=
\langle\mathbf n|(S^z)^r|\mathbf n\rangle,
\end{align}
where $\mathbf n$ is parametrized by Eq.~\eqref{eq:coherent-variables}.
Using the standard expansion of spin coherent states~\cite{Fradkin-textbook}, we obtain
\begin{align}
\mathcal M_r^{(S)}(\psi)
&=
\sum_{k=0}^{2S}
\binom{2S}{k}
\left(\frac{1+\psi}{2}\right)^k
\left(\frac{1-\psi}{2}\right)^{2S-k}
(k-S)^r.
\label{eq:app-general-moment}
\end{align}
For example, the first, second, and fourth ($r=1,2,4$) moments are
\begin{align}
\mathcal M_1^{(S)}(\psi)
&=S\psi,\\
\mathcal M_2^{(S)}(\psi)
&=\frac S2+
\frac{S(2S-1)}2\psi^2,\\
\mathcal M_4^{(S)}(\psi)
&=\frac{S(3S-1)}4+\frac{S(2S-1)(3S-2)}2 \psi^2\notag\\
&\qquad +\frac{S(S-1)(2S-1)(2S-3)}4\psi^4.
\end{align}
In general, $\mathcal M_r^{(S)}(\psi)$ is a polynomial in $\psi$ of degree at most $\min(r,2S)$,
with the same parity as $r$.

The operator identity
\begin{equation}
\prod_{m=-S}^{S}(S_i^z-m)=0,
\label{eq:app-local-polynomial}
\end{equation}
implies that powers of $S_i^z$ higher than $2S$ are not independent. 
In particular, we have
\begin{equation}
    \begin{cases}
        (S_i^z)^2=1/4 & \left( S=1/2 \right), \\
        (S_i^z)^3=S_i^z & \left( S=1 \right).
    \end{cases}
\end{equation}
These identities constrain the possible local interactions for given $S$.

The coherent-state expectation values of the longitudinal interaction terms factorize into single-site moments:
\begin{align}
\left\langle\{\mathbf n_j\}\right|
\prod_{r=0}^{R}(S_{i+r}^z)^{n_r}
\left|\{\mathbf n_j\}\right\rangle
=
\prod_{r=0}^{R}
\mathcal M_{n_r}^{(S)}(\psi_{i+r}).
\label{eq:app-factorization}
\end{align}
Expanding this expression in spatial gradients yields the corresponding contributions to the continuum Lagrangian.

\subsection{General-spin extension of the four-site model}

We first generalize the spin-$1/2$ Hamiltonian in Eq.~\eqref{eq:model-spin-half} to arbitrary $S\geq1/2$:
\begin{align}
&H_S^{(4)}
=-J\sum_i\mathbf S_i\cdot\mathbf S_{i+1}
+g_1\sum_iS_i^zS_{i+1}^z
\notag\\
&\qquad\qquad\quad+g_4\sum_i
S_i^zS_{i+1}^zS_{i+2}^zS_{i+3}^z.
\label{eq:app-four-site-H}
\end{align}
Using the gradient expansion, we obtain
\begin{align}
&\mathcal L_S^{(4)}
=-iS\psi\partial_\tau\theta
+\frac{JS^2}{2}(\partial_x\theta)^2
\notag\\
&\qquad\qquad+\frac{S^2(J-g_1)}{2}
(\partial_x\psi)^2
+g_1S^2\psi^2
\notag\\
&\qquad\qquad+g_4S^4\psi^4+\cdots.
\label{eq:app-four-site-L}
\end{align}
Hereafter, we omit total spatial derivatives, the Berry-phase winding
contribution, and higher-order interactions,
as in Sec.~\ref{sec:Models}.

Equation~\eqref{eq:app-four-site-L} has the same leading local structure as Eq.~\eqref{eq:action} for any $S\geq1/2$. 
However, varying $g_1$ changes the coefficients of both $\psi^2$ and $(\partial_x\psi)^2$.
Consequently, the coherent-state derivation alone does not guarantee that tuning $g_1$ at fixed $g_4>0$ reaches the interacting critical point.
Nevertheless, our numerical results in Sec.~\ref{sec:spin_half} demonstrate that
the $S=1/2$ model can realize the proposed
critical behavior.

\subsection{Independent control of the quadratic potential}

Adding a next-nearest-neighbor interaction to the $S\ge1/2$ models in Eq.~\eqref{eq:app-four-site-H} enables independent control of the quadratic potential.
For arbitrary $S\ge1/2$, consider the Hamiltonian
\begin{align}
&\widetilde H_S^{(4)}
=-J\sum_i\mathbf S_i\cdot\mathbf S_{i+1}
\notag\\
&\qquad\qquad+\frac{g_1}3\sum_i
\left(
4S_i^zS_{i+1}^z
-S_i^zS_{i+2}^z
\right)
\notag\\
&\qquad\qquad+g_4\sum_i
S_i^zS_{i+1}^zS_{i+2}^zS_{i+3}^z,
\label{eq:app-improved-H}
\end{align}
which yields
\begin{align}
&\widetilde{\mathcal L}_S^{(4)}
=-iS\psi\partial_\tau\theta
+\frac{JS^2}{2}
\left[
(\partial_x\theta)^2
+(\partial_x\psi)^2
\right]
\notag\\
&\qquad\qquad\quad+g_1S^2\psi^2
+g_4S^4\psi^4+\cdots.
\label{eq:app-improved-L}
\end{align}
This construction applies to all $S\geq1/2$.
At fixed $g_4>0$, we can vary the parameter $g_1$ to tune the quadratic potential independently, 
without modifying the other coefficients at the order considered.

\subsection{General-spin extension of the nearest-neighbor model}

We next generalize the spin-$1$ model in Eq.~\eqref{eq:model-spin-one}, 
originally proposed in Ref.~\cite{Nahum-25},
to arbitrary spin magnitude $S$.
We define
\begin{align}
&H_S^{(2)}
=-J\sum_i\mathbf S_i\cdot\mathbf S_{i+1}
+g_1\sum_i(S_i^z)^2
\notag\\
&\qquad\qquad\quad+g_2\sum_i
(S_i^z)^2(S_{i+1}^z)^2,
    \label{eq:app-nearest-H}
\end{align}
which yields
\begin{align}
&\mathcal L_S^{(2)}
=-iS\psi\partial_\tau\theta
+\frac{JS^2}{2}
\left[
(\partial_x\theta)^2
+(\partial_x\psi)^2
\right]
\notag\\
&\qquad\qquad+A_S(g_1+Sg_2)\psi^2
+A_S^2g_2\psi^4+\cdots,
\label{eq:app-nearest-L}
\end{align}
with
\begin{equation}
A_S=\frac{S(2S-1)}2.
\end{equation}
For $S\geq1$, we have $A_S>0$. 
Hence, at fixed $g_2>0$, varying $g_1$ tunes the quadratic potential without changing the other coefficients at the bare level.
For $S=1/2$, however, $A_S=0$ turns off both quadratic and quartic interactions. 
The resulting theory reduces to the ferromagnetic Heisenberg point.

\subsection{Realization using on-site interactions}

For $S\ge2$, the quartic potential
can be generated using on-site anisotropies alone.
Consider
\begin{equation}
H_S^{(\mathrm{on})}
=-J\sum_i\mathbf S_i\cdot\mathbf S_{i+1}
+g_1\sum_i(S_i^z)^2
+g_{\text{on}}\sum_i(S_i^z)^4,
\label{eq:app-onsite-H}
\end{equation}
which yields
\begin{align}
&\mathcal L_S^{(\mathrm{on})}
=-iS\psi\partial_\tau\theta
+\frac{JS^2}{2}
\left[
(\partial_x\theta)^2
+(\partial_x\psi)^2
\right]
\notag\\
&\qquad\qquad+A_S\left[g_1+(3S-2)g_{\text{on}}\right]\psi^2
\notag\\
&\qquad\qquad+A_{4,S}g_{\text{on}}\psi^4+\cdots,
    \label{eq:app-onsite-L}
\end{align}
with
\begin{equation}
A_{4,S}
=\frac{S(S-1)(2S-1)(2S-3)}4.
\end{equation}
For $S\geq2$, we have $A_S>0$ and $A_{4,S}>0$.
Fixing $g_{\text{on}}>0$ while tuning $g_1$ provides another candidate microscopic realization of the desired field theory.

\section{Numerical convergence criteria}
    \label{app:numerical-convergence}

\begin{table*}[t]
\caption{Parameters for ground-state DMRG.}
\label{tab:dmrg-ground-convergence}
\centering
\setlength{\tabcolsep}{5pt}
\begin{tabular}{@{}lccc@{}}
\hline\hline
& $\delta e$ & $v$ & $\epsilon_{\mathrm{tr}}$ \\
\hline
\multicolumn{4}{@{}l}{~\textit{Spin-$1$}} \\
~$-g_1=0.8$--$1.0$, $L=10$--$18$
& $10^{-9}$ & $10^{-8}$ & $10^{-11}$ \\
~$-g_1=0.8$--$0.9$, $L=20$--$50$
& $10^{-9}$ & $10^{-8}$ & $10^{-10}$ \\
~$-g_1=0.825$--$0.830$, $L=52$--$70,90,92$
& $10^{-7}$ & $10^{-7}$ & ~~$5\times10^{-10}$~~ \\
~$-g_{1}=0.8258$, $L=16$--$44$
& $10^{-9}$ & $10^{-7}$ & $10^{-9}$ \\
~$-g_{1}=0.8258$, $L=48$--$72$
& ~~$2\times10^{-10}$~~ & ~~$5\times10^{-8}$~~ & $10^{-11}$ \\
\noalign{\vskip4pt}
\multicolumn{4}{@{}l}{~\textit{Spin-$1/2$}} \\
~~~$g_1=0.366$--$0.3827$, $L=32$--$68$
& $2\times10^{-10}$ & $10^{-9}$ & $10^{-11}$ \\
~~~$g_{1}=0.3826$, $L=20$--$52$
& $2\times10^{-11}$ & $10^{-10}$ & $10^{-12}$ \\
\hline\hline
\end{tabular}
\end{table*}
\begin{table}[t]
\caption{Parameters for excited-state DMRG.}
\label{tab:excited-state-convergence}
\centering
\begin{tabular}{lc}
\hline\hline
~$\delta e$ & $10^{-5}$ \\
~$v$ & $5\times10^{-7}$ \\
~$\epsilon_{\mathrm{tr}}$ & $10^{-11}$ \\
~Penalty weights $w$ & $1,2$ \\
~Overlap $|\langle\psi_0|\psi_1\rangle|$
& ~~$\leq2\times10^{-5}$~~ \\
\hline\hline
\end{tabular}
\end{table}

\begin{table}[t]
\caption{Parameters for ground-state VUMPS.}
\label{tab:vumps-convergence}
\centering
\begin{tabular}{lc}
\hline\hline
~Galerkin residual tolerance & ~~$10^{-10}$~~ \\
~Unit-cell size & ~~$1$~~ \\
\hline\hline
\end{tabular}
\end{table}

\subsection{Ground-state DMRG}
\label{app:ground-state-dmrg}

In Secs.~\ref{sec:binder}, \ref{subsec:ssb-fss},
\ref{sec:critical-dynamics}, and~\ref{sec:spin_half},
we use two-site DMRG to obtain the lowest-energy states
in the $M_z=0$ sector of periodic chains.
For a normalized state, we monitor the energy change
per site and the energy variance per site,
\begin{align}
\delta e
&=\frac{|E_{\mathrm{new}}-E_{\mathrm{old}}|}{L},
&
v
&=\frac{|\langle H^2\rangle-\langle H\rangle^2|}{L}.
\label{eq:dmrg-convergence-measures}
\end{align}
Here, $H$ is the Hamiltonian, and $\delta e$ compares successive optimization steps.
Convergence requires both $\delta e$ and $v$ to satisfy
the specified thresholds and the bond dimension to
remain below its imposed cap.
The truncation cutoff $\epsilon_{\mathrm{tr}}$ controls
the discarded weight in the singular-value
decompositions used during DMRG.
The thresholds are summarized in
Table~\ref{tab:dmrg-ground-convergence}.

\subsection{Excited-state DMRG}
\label{app:excited-state-dmrg}

For the neutral-gap calculations in Sec.~\ref{sec:critical-dynamics},
we obtain the first excited state $|\psi_1\rangle$ using the same
two-site DMRG procedure as in Appendix~\ref{app:ground-state-dmrg}.
We variationally optimize the state for the penalized
Hamiltonian $H_w=H+w|\psi_0\rangle\langle\psi_0|$,
where $|\psi_0\rangle$ is the approximate ground state
obtained in Appendix~\ref{app:ground-state-dmrg}.
In addition to monitoring $\delta e$ and $v$ as defined
in Eq.~\eqref{eq:dmrg-convergence-measures}, we repeat
the optimization with $w=1$ and $2$, 
and check both the difference between the resulting energies and the overlap with $|\psi_0\rangle$.
The parameters are listed in Table~\ref{tab:excited-state-convergence}.

\subsection{Ground-state VUMPS}
\label{app:vumps-convergence}

In Secs.~\ref{sec:SSB} and~\ref{sec:longitudinal}, 
we use the VUMPS algorithm with a one-site unit cell for the infinite spin-$1$ chain.
The MPS preserves U(1) symmetry and has zero
magnetization density.
The bond dimension $\chi$ is fixed during each
optimization, and a converged state is used to
initialize the calculation at a larger $\chi$.
Convergence is determined by the Galerkin residual, which measures the violation of the variational stationarity condition.
The parameters are summarized in Table~\ref{tab:vumps-convergence}.

\section{Finite-size correction exponent}
\label{app:binder-omega}
\begin{figure*}[t]
    \centering
    \includegraphics[width=\textwidth]{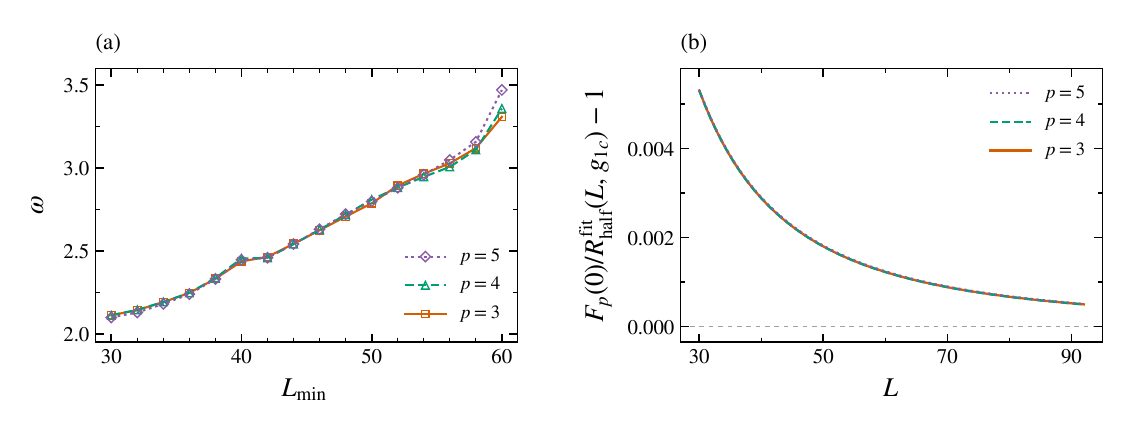}
    \caption{
    Finite-size corrections in the Binder ratio fits
    of Sec.~\ref{sec:binder}.
    (a)~Fitted correction exponent $\omega$ as a function
    of the minimum system size $L_{\min}$ included
    in each fit.
    (b)~Relative contribution of the finite-size correction
    at the critical coupling estimated from each fit,
    defined in Eq.~\eqref{eq:binder-relative-correction},
    for $L_{\min}=30$.
    In both panels, the maximum system size is fixed at
    $L_{\max}=92$, and the polynomial degrees are $p=3,4,5$.
}
    \label{fig:binder-omega-leading}
\end{figure*}

We discuss the finite-size correction exponent $\omega$
obtained from the Binder ratio fits in Sec.~\ref{sec:binder}.
The fits use the scaling form in Eq.~\eqref{eq:binder-fss},
with the polynomial approximations
$F(x)\simeq F_p(x) =\sum_{n=0}^p a_nx^n$ and $G(x)\simeq b_0+b_1x$.
We fix the maximum system size included in each fit at
$L_{\max}=92$ and vary the minimum size $L_{\min}$
from $30$ to $60$, using polynomial degrees $p=3,4,5$.
In Fig.~\ref{fig:binder-omega-leading}(a), we show the resulting
estimates of $\omega$ as functions of $L_{\min}$.
The estimates depend weakly on the polynomial degree
but increase systematically from approximately $2.1$
at $L_{\min}=30$ to $3.3$--$3.5$ at $L_{\min}=60$,
without reaching a plateau.
Thus, unlike the estimates of $g_{1c}$ and $\nu$
discussed in Sec.~\ref{sec:binder}, the fitted value
of $\omega$ is sensitive to the minimum system size.

To quantify the relative contribution of the finite-size
correction at the estimated critical point $g_{1c}$, we evaluate
\begin{align}
\frac{F_p(0)-R_{\text{half}}^{\mathrm{fit}}(L,g_{1c})}
     {R_{\text{half}}^{\mathrm{fit}}(L,g_{1c})},
\label{eq:binder-relative-correction}
\end{align}
where $R_{\text{half}}^{\mathrm{fit}}(L,g_1)$ denotes
the fitted Binder ratio.
In Fig.~\ref{fig:binder-omega-leading}(b), we show this relative
contribution for the fits with $L_{\min}=30$.
It remains below $6\times10^{-3}$ throughout the displayed
size range and decreases as $L$ increases.
The finite-size correction thus accounts for only
a small fraction of the fitted Binder ratio at the
estimated critical point.
Over the range of system sizes included in each fit,
changes in $\omega$ can be partly compensated by adjusting
the polynomial coefficients $a_0,\ldots,a_p$ and $b_0,b_1$.
In addition to the small finite-size correction, this can
make the estimate of $\omega$ unstable when it is fitted
simultaneously with these coefficients.
This is consistent with the observed sensitivity of
$\omega$ to $L_{\min}$.
Despite this sensitivity, the fitted values satisfy
$\omega\geq2$ for all choices of $L_{\min}$ and $p$
considered here.
The Binder ratio fits thus consistently suggest
$\omega\gtrsim2$ as a single effective finite-size
correction, which supports the choice $\omega=2$ in Sec.~\ref{sec:critical-dynamics}.

\section{Criterion for spontaneous U(1) symmetry breaking}
    \label{app:ssb-criterion}

We establish a sufficient criterion for spontaneous U(1) symmetry breaking from transverse long-range order in ground states restricted to fixed U(1)-charge (i.e., magnetization) sectors.
We first formulate and prove a standard variational argument relating long-range order to symmetry breaking~\cite{KaplanHorschVonderLinden-89,
KomaTasaki-94,Tasaki-19}.
We then show that locality and magnetization-reversal symmetry guarantee its energy assumption for ground states in sectors with subextensive magnetization.
Finally, we verify the structural assumptions for the two spin chains studied in Eqs.~\eqref{eq:model-spin-one} and \eqref{eq:model-spin-half}.
Throughout this appendix, 
we use the limit inferior ($\liminf$), instead of the ordinary limit ($\lim$), to formulate the results without assuming the existence of limits.
The symbol $C$ denotes a positive constant independent of the system size and applied field; 
its value may change between estimates.

\subsection{A general variational criterion}
    \label{app:ssb-general}

The following proposition provides a variational bound on the response to an external field without assuming any specific symmetry.
We apply this result to spontaneous U(1) symmetry breaking in Appendix~\ref{app:ssb-sector}.

\begin{proposition}
\label{pro:lro-ssb}
For each sufficiently large integer $L$, let
$H^{(L)}$ and $O^{(L)}$ be self-adjoint operators on a finite-dimensional Hilbert space. 
Let $E_{\mathrm{gs}}^{(L)}$ denote the lowest
eigenvalue of $H^{(L)}$. 
Assume that
\begin{equation}
\|O^{(L)}\|\le CL, \quad
\|[O^{(L)},[H^{(L)},O^{(L)}]]\|\le CL.
\label{eq:ssb-general-norm-bounds}
\end{equation}
Let $|\Psi_L\rangle$ be a normalized eigenstate satisfying
\begin{equation}
H^{(L)}|\Psi_L\rangle
=E^{(L)}|\Psi_L\rangle, \quad
E^{(L)}-E_{\mathrm{gs}}^{(L)}
=o(L),
\label{eq:ssb-general-energy-assumption}
\end{equation}
and $\langle \cdots\rangle_{\Psi_L}$ be its expectation values.
Suppose that
\begin{align}
\langle O^{(L)}\rangle_{\Psi_L}
&=\langle(O^{(L)})^3\rangle_{\Psi_L}=0,
\label{eq:ssb-general-odd-moments}\\
m^2
&\coloneqq
\liminf_{L\to\infty}
\frac{\langle(O^{(L)})^2\rangle_{\Psi_L}}{L^2}
>0,
\label{eq:ssb-general-lro}
\end{align}
with $m>0$.
For $h>0$, let $\langle\cdots\rangle_{h,L}$ denote expectation values in any normalized ground state of
$H^{(L)}-hO^{(L)}$. 
Then, we have
\begin{align}
\liminf_{L\to\infty}
\frac{\langle O^{(L)}\rangle_{h,L}}{L}
&\ge m
\qquad(h>0),
\label{eq:ssb-general-fixed-field}
\end{align}
and consequently,
\begin{align}
\liminf_{h\downarrow0}\liminf_{L\to\infty}
\frac{\langle O^{(L)}\rangle_{h,L}}{L}
&\ge m>0.
\label{eq:ssb-general-field-bound}
\end{align}
\end{proposition}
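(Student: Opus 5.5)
The plan is the standard variational argument of Koma–Tasaki type: build a trial state from $|\Psi_L\rangle$ by a small linear tilt along $O^{(L)}$, show that it costs only $o(L)$ energy relative to $E_{\mathrm{gs}}^{(L)}$ while carrying an $O^{(L)}$-expectation of order $mL$, and then compare with the true ground state of $H^{(L)}-hO^{(L)}$. I drop the superscript $(L)$ and write $\langle\cdots\rangle=\langle\cdots\rangle_{\Psi_L}$. Since $\langle O^2\rangle/L^2$ has $\liminf$ equal to $m^2>0$, for all large $L$ we have $\langle O^2\rangle\ge m^2L^2/2>0$. For such $L$ set
\begin{equation*}
|\Phi_L\rangle=\frac{(1+cO)|\Psi_L\rangle}{\sqrt{N}},\qquad c=\langle O^2\rangle^{-1/2}\le \frac{\sqrt2}{mL},
\end{equation*}
with $N=1+c^2\langle O^2\rangle+2c\langle O\rangle=1+c^2\langle O^2\rangle=2$, using $\langle O\rangle=0$.

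First, compute the order parameter in the trial state. Expanding gives $\langle\Phi_L|O|\Phi_L\rangle=(2c\langle O^2\rangle+c^2\langle O^3\rangle)/N$. This is where both odd-moment conditions in Eq.~\eqref{eq:ssb-general-odd-moments} are used: $\langle O^3\rangle=0$ removes the $c^2$ term, so the expression equals $c\langle O^2\rangle=\sqrt{\langle O^2\rangle}$. Therefore $\liminf_L\langle\Phi_L|O|\Phi_L\rangle/L\ge m$. Choosing $c$ depending on $L$, rather than a fixed $\alpha/L$, avoids having to optimize a nonlinear function of $\langle O^2\rangle/L^2$ under a $\liminf$.

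Second, bound the energy of the trial state. Since $|\Psi_L\rangle$ is an eigenstate, the identity $OHO=\tfrac12(O^2H+HO^2)+\tfrac12[O,[H,O]]$ gives
\begin{equation*}
\langle(1+cO)H(1+cO)\rangle=E\,(1+c^2\langle O^2\rangle)+\tfrac{c^2}{2}\langle[O,[H,O]]\rangle+2cE\langle O\rangle .
\end{equation*}
The last term vanishes because $\langle O\rangle=0$. Dividing by $N$ yields
\begin{equation*}
\langle\Phi_L|H|\Phi_L\rangle\le E^{(L)}+\frac{c^2}{2N}\,CL\le E^{(L)}+\frac{C}{L}.
\end{equation*}
Together with Eq.~\eqref{eq:ssb-general-energy-assumption}, this gives $\langle\Phi_L|H|\Phi_L\rangle-E_{\mathrm{gs}}^{(L)}=o(L)$. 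Only the double-commutator bound from Eq.~\eqref{eq:ssb-general-norm-bounds} is needed in this step; the bound $\|O\|\le CL$ merely ensures all quantities are finite and of the stated orders.

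Third, pass to the tilted Hamiltonian. Let $|\Omega_h\rangle$ be a normalized ground state of $H-hO$. The variational principle gives $\langle H\rangle_{h,L}-h\langle O\rangle_{h,L}\le\langle\Phi_L|H|\Phi_L\rangle-h\langle\Phi_L|O|\Phi_L\rangle$, and trivially $\langle H\rangle_{h,L}\ge E_{\mathrm{gs}}^{(L)}$. Combining the two,
\begin{equation*}
\langle O\rangle_{h,L}\ge\langle\Phi_L|O|\Phi_L\rangle-\frac{\langle\Phi_L|H|\Phi_L\rangle-E_{\mathrm{gs}}^{(L)}}{h}.
\end{equation*}
Dividing by $L$ at fixed $h>0$, the last term is $o(1)$, so the first step yields Eq.~\eqref{eq:ssb-general-fixed-field}. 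The bound is uniform in $h$, so taking $\liminf_{h\downarrow0}$ gives Eq.~\eqref{eq:ssb-general-field-bound}.

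The main point of care is the bookkeeping of limits in the last step. The energy excess is only $o(L)$, not $O(1)$, so it must be divided by a fixed $h$ before $L\to\infty$, which is exactly the stated order of limits. One also has to handle the $\liminf$ consistently, which is why $c$ is tied to the actual $\langle O^2\rangle$ at each $L$ rather than to $m$.
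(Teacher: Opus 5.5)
Your proposal is correct and is essentially the paper's proof. It uses the same trial state $(1+O/\sqrt{v_L})|\Psi_L\rangle/\sqrt{2}$ and the same double-commutator identity, which gives the energy excess $\langle[O,[H,O]]\rangle/(4v_L)\le C/L$. It also makes the same variational comparison with the ground state of $H-hO$, with the $o(L)$ excess divided by a fixed $h>0$ before taking $L\to\infty$.
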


\noindent\textit{Proof.}
For any normalized trial state $|\eta_L\rangle$,
the variational principle gives
\begin{align}
E_{\mathrm{gs}}^{(L)}
-h\langle O^{(L)}\rangle_{h,L}
&\le
\langle H^{(L)}-hO^{(L)}\rangle_{h,L}
\notag\\
&\le
\langle H^{(L)}-hO^{(L)}\rangle_{\eta_L},
\end{align}
leading to
\begin{align}
\frac{\langle O^{(L)}\rangle_{h,L}}{L}
\ge
\frac{\langle O^{(L)}\rangle_{\eta_L}}{L}
-
\frac{
\langle H^{(L)}\rangle_{\eta_L}
-E_{\mathrm{gs}}^{(L)}
}{hL}.
\label{eq:ssb-variational-comparison}
\end{align}
Equation~\eqref{eq:ssb-general-lro} ensures
\begin{align}
v_L=\langle(O^{(L)})^2\rangle_{\Psi_L} > 0
\end{align}
for all sufficiently large $L$.
Further defining
\begin{align}
|\eta_L\rangle
&=\frac{1}{\sqrt{2}}
\left(
1+\frac{O^{(L)}}{\sqrt{v_L}}
\right)|\Psi_L\rangle,
\label{eq:ssb-trial-state}
\end{align}
we obtain from Eq.~\eqref{eq:ssb-general-odd-moments}
\begin{align}
\langle\eta_L|\eta_L\rangle&=1,
&
\langle O^{(L)}\rangle_{\eta_L}&=\sqrt{v_L}.
\label{eq:ssb-trial-magnetization}
\end{align}
Since $|\Psi_L\rangle$ is an eigenstate of $H^{(L)}$,
we have
\begin{align}
\langle H^{(L)}\rangle_{\eta_L}-E^{(L)}
&=
\frac{
\langle O^{(L)}H^{(L)}O^{(L)}\rangle_{\Psi_L}
-E^{(L)}v_L
}{2v_L}
\notag\\
&=
\frac{
\langle[O^{(L)},[H^{(L)},O^{(L)}]]\rangle_{\Psi_L}
}{4v_L}.
\label{eq:ssb-trial-energy}
\end{align}
Applying the double-commutator bound in Eq.~\eqref{eq:ssb-general-norm-bounds}, we obtain
\begin{equation}
    \langle H^{(L)}\rangle_{\eta_L} \le E^{(L)}+\frac{CL}{4v_L}.
\end{equation}
Substituting this bound and Eq.~\eqref{eq:ssb-trial-magnetization} into Eq.~\eqref{eq:ssb-variational-comparison} yields
\begin{align}
\frac{\langle O^{(L)}\rangle_{h,L}}{L}
&\ge
\frac{\sqrt{v_L}}{L}
-\frac{E^{(L)}-E_{\mathrm{gs}}^{(L)}}{hL}
-\frac{C}{4hv_L}.
\label{eq:ssb-general-finite-bound}
\end{align}
For fixed $h>0$, the second term on the right-hand side vanishes as $L\to\infty$ by
Eq.~\eqref{eq:ssb-general-energy-assumption}.
The third term also vanishes because
$\liminf_{L\to\infty}v_L/L^2=m^2>0$.
Taking the limit inferior as $L\to\infty$ proves
Eq.~\eqref{eq:ssb-general-fixed-field}.
Since the resulting lower bound is independent of $h>0$,
Eq.~\eqref{eq:ssb-general-field-bound} follows.
\hfill$\square$

The reference eigenstates need not be ground states of the unrestricted finite-volume Hamiltonians.
Proposition~\ref{pro:lro-ssb} requires only that their energy density approach the unrestricted ground-state energy density in the thermodynamic limit.
Neither the uniqueness of the reference eigenstates nor
that of the ground states in the presence of the applied field is assumed.
When $O^{(L)}$ is an order parameter associated with a symmetry of $H^{(L)}$, 
Eq.~\eqref{eq:ssb-general-field-bound} establishes spontaneous symmetry breaking in the infinitesimal-field sense.

\subsection{Low-magnetization sectors}
    \label{app:ssb-sector}

Fix a finite spin $S\in\{1/2,1,3/2,\ldots\}$ and a self-adjoint local operator $\hat h$ supported on $r_0$ consecutive sites.
For each sufficiently large integer $L$, consider the periodic spin chain
\begin{align}
H^{(L)}=\sum_{i=1}^{L}\hat h_i^{(L)},
\label{eq:ssb-local-hamiltonian}
\end{align}
where $\hat h_i^{(L)}$ acts as $\hat h$ on the
$r_0$ consecutive sites $i,i+1,\ldots,i+r_0-1$
and as the identity on all other sites,
with site indices understood modulo $L$.
The local operator $\hat h$ is independent of the system size $L$.

Define the total magnetization operators by
\begin{align}
M_\alpha^{(L)}
=\sum_{i=1}^{L}S_i^\alpha
\qquad \left( \alpha=x,y,z \right).
\end{align}
Assume U(1) symmetry,
\begin{align}
[H^{(L)},M_z^{(L)}]=0,
\label{eq:ssb-u1-assumption}
\end{align}
in addition to a unitary or antiunitary magnetization-reversal
symmetry $\mathcal R^{(L)}$ satisfying
\begin{equation}
\begin{split}
\mathcal R^{(L)}H^{(L)}(\mathcal R^{(L)})^{-1}
&=H^{(L)},\\
\mathcal R^{(L)}M_z^{(L)}(\mathcal R^{(L)})^{-1}
&=-M_z^{(L)}.
    \label{eq:ssb-reversal-assumption}
\end{split}
\end{equation}
The allowed U(1) charges, corresponding to the eigenvalues of $M_z^{(L)}$, are
\begin{align}
\mathcal Q_L=\{-SL,-SL+1,\ldots,SL\}.
\end{align}
For $q\in\mathcal Q_L$, let $E_q^{(L)}$ denote the lowest energy in the sector with $M_z^{(L)}=q$.
The unrestricted ground-state energy is then
\begin{align}
E_{\mathrm{gs}}^{(L)}
=\min_{q\in\mathcal Q_L}E_q^{(L)}.
\end{align}

\begin{corollary}
    \label{cor:ssb-sector}
Under the assumptions above, choose an allowed
magnetization $q_L\in\mathcal Q_L$ for each
sufficiently large integer $L$, such that
\begin{align}
\frac{|q_L|}{L}\longrightarrow0.
\label{eq:ssb-subextensive-magnetization}
\end{align}
Let $|q_L\rangle$ be a normalized lowest-energy state in that sector, satisfying
\begin{equation}
H^{(L)}|q_L\rangle
=E_{q_L}^{(L)}|q_L\rangle,\quad
M_z^{(L)}|q_L\rangle
=q_L|q_L\rangle,
\label{eq:prop-q_L}
\end{equation}
and $\langle\cdots\rangle_{q_L}$ be its expectation values.
Suppose that
\begin{align}
m_\perp^2
\coloneqq
\liminf_{L\to\infty}
\frac{
\langle(M_x^{(L)})^2+(M_y^{(L)})^2\rangle_{q_L}}{L^2}
>0.
\label{eq:ssb-sector-lro}
\end{align}
Then, for expectation values $\langle \cdots\rangle_{h,L}$ in any normalized,
unrestricted ground state of $H^{(L)}-hM_x^{(L)}$, we have
\begin{align}
\liminf_{h\downarrow0}\liminf_{L\to\infty}
\frac{\langle M_x^{(L)}\rangle_{h,L}}{L}
&\ge\sqrt{\frac{m_\perp^2}{2}}>0,
\label{eq:ssb-field-bound}
\end{align}
implying spontaneous U(1) symmetry breaking.
\end{corollary}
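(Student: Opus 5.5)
The plan is to apply Proposition~\ref{pro:lro-ssb} with $O^{(L)}=M_x^{(L)}$ and $|\Psi_L\rangle=|q_L\rangle$. The structural hypotheses are routine; the energy hypothesis~\eqref{eq:ssb-general-energy-assumption} is the real work.

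\emph{Norm bounds.} First, $\|M_x^{(L)}\|\le SL$. Next, $[H^{(L)},M_x^{(L)}]=\sum_i[\hat h_i^{(L)},M_x^{(L)}]$. Only the $r_0$ sites in the support of $\hat h_i^{(L)}$ contribute to the $i$th term. The same holds for the second commutator. Hence $[M_x,[H,M_x]]$ is a sum of $L$ terms, each of norm at most $4r_0^2S^2\|\hat h\|$, and so the bound is $CL$.

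\emph{Odd moments.} Write $M_x=(M_++M_-)/2$. Then $M_x$ and $M_x^3$ are sums of operators that shift $M_z$ by odd amounts. Their expectation values in the $M_z$ eigenstate $|q_L\rangle$ therefore vanish.

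\emph{Long-range order.} The rotation $U=e^{i\pi M_z/2}$ maps $M_x$ to $\pm M_y$ and leaves $|q_L\rangle$ invariant up to a phase. Hence $\langle M_x^2\rangle_{q_L}=\langle M_y^2\rangle_{q_L}$. This gives $\liminf_L\langle M_x^2\rangle_{q_L}/L^2=m_\perp^2/2>0$, so $m=\sqrt{m_\perp^2/2}$. Once the energy assumption is checked, Eq.~\eqref{eq:ssb-field-bound} follows directly from Eq.~\eqref{eq:ssb-general-field-bound}.

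\emph{Energy assumption (main obstacle).} We must show $E_{q_L}^{(L)}-E_{\mathrm{gs}}^{(L)}=o(L)$. The difficulty is that the unrestricted ground state may lie in a sector with extensive magnetization $q^\ast\sim L$, as in the $\mathbb Z_2$-broken phase. The idea is to use magnetization reversal to build a state with canceling magnetization, which costs only boundary energy.

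Cut the ring into three consecutive open blocks: two blocks $A$ and $B$ of equal length $\ell$, and a buffer $D$ of length $b=\max(\lceil |q_L|/S\rceil+1,\lceil\sqrt L\rceil)=o(L)$, with $2\ell+b=L$ (adjusting $b$ by one if needed).

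\emph{Lower bound.} Writing $H^{(L)}=H_A^{\rm open}+H_B^{\rm open}+H_D^{\rm open}+H_\partial$, where $H_\partial$ contains the $O(r_0)$ terms straddling cuts, the minimum of a sum is at least the sum of minima. This gives
\begin{equation*}
E_{\mathrm{gs}}^{(L)}\ge 2E^{\rm open}_{\mathrm{gs}}(\ell)-Cb-C .
\end{equation*}

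\emph{Trial state.} Take an open-chain ground state $|g\rangle$ on $A$, which may be chosen with definite magnetization $q'$ because $M_z$ commutes with the open-chain Hamiltonian. On $B$, place its reversed copy $\mathcal R|g\rangle$, with magnetization $-q'$. On $D$, place any product state with magnetization $q_L$, which exists since $Sb\ge|q_L|$. The tensor product lies in the sector $M_z=q_L$. Its energy is at most $2E^{\rm open}_{\mathrm{gs}}(\ell)+Cb+C$, so
\begin{equation*}
E_{q_L}^{(L)}-E_{\mathrm{gs}}^{(L)}\le 2Cb+2C=o(L).
\end{equation*}

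This step requires $\mathcal R$ to act site-wise, i.e.\ to be a product over sites possibly composed with complex conjugation, so that it restricts to an open block with $\mathcal R H_B^{\rm open}\mathcal R^{-1}=H_B^{\rm open}$. The generic hypothesis~\eqref{eq:ssb-reversal-assumption} does not guarantee this. I will therefore add it as an assumption, which holds for the models in Eqs.~\eqref{eq:model-spin-one} and \eqref{eq:model-spin-half} via $\prod_j e^{i\pi S_j^x}$. If one wants to avoid the assumption, the fallback is to construct the $-q'$ block directly from the spectrum of the open chain. Checking this locality of $\mathcal R$ carefully is where I expect the argument to need the most care.
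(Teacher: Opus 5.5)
Your overall plan matches the paper's proof. You apply Proposition~\ref{pro:lro-ssb} with $O^{(L)}=M_x^{(L)}$ and $|\Psi_L\rangle=|q_L\rangle$. You get the norm and moment conditions from locality and U(1) covariance; your ladder-operator and $e^{i\pi M_z/2}$ arguments are fine. You then prove the energy condition by gluing two large blocks of opposite magnetization to a small block carrying $q_L$.

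The gap is in that last step, which as written proves only a weaker statement. Your open-block construction needs the open-chain Hamiltonian to commute with $M_z$, which you assert without justification. It also needs that Hamiltonian to be $\mathcal R$-invariant, which you flag and propose to assume. Neither follows from Eqs.~\eqref{eq:ssb-u1-assumption} and~\eqref{eq:ssb-reversal-assumption}, which constrain only the periodic sums $H^{(L)}$. A local term may contain a telescoping piece $X_i-X_{i+1}$ that cancels on the ring but leaves $X_{\mathrm{first}}-X_{\mathrm{last}}$ on an open block. This breaks U(1) if $X=S^x$ and reversal if $X=S^z$. Your extra on-site assumption on $\mathcal R$ is not part of the statement and does not address the U(1) issue. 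Your fallback via the open-chain spectrum also fails as stated, since such a boundary field can make $E^{\mathrm{open}}_{-q'}\neq E^{\mathrm{open}}_{q'}$.

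The missing idea, which is what the paper does, is to use \emph{periodic} block Hamiltonians. The symmetry hypotheses hold for these automatically, because they are imposed at every sufficiently large size. Take $\ell$ minimal with $\ell\equiv L\pmod 2$ and $S\ell\ge|q_L|$, set $n=(L-\ell)/2$, and write $H^{(L)}=H^{(n)}\otimes\mathbf 1\otimes\mathbf 1+\mathbf 1\otimes H^{(n)}\otimes\mathbf 1+B_{L,\ell}$. The remainder $B_{L,\ell}$ collects the $O(\ell)$ terms touching the small block, the $O(1)$ terms straddling the cuts, and, with a minus sign, the $O(1)$ wrap-around terms of each periodic copy. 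Hence $\|B_{L,\ell}\|\le C(1+\ell)$.

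A ground state of $H^{(n)}$ can be chosen in some sector $p$, and reversal at size $n$ gives $E^{(n)}_{-p}=E^{(n)}_{\mathrm{gs}}$. Any ground states $|p\rangle_n$ and $|-p\rangle_n$ will do; they need not be related by $\mathcal R$, since only their energies matter. Tensor them with any magnetization-$q_L$ state on the small block. Such a state exists because $|q_L|\le S\ell$ and $q_L+S\ell=(q_L+SL)-2Sn\in\mathbb Z$; you omitted this integrality check for your buffer. This gives $E^{(L)}_{q_L}\le 2E^{(n)}_{\mathrm{gs}}+\|B_{L,\ell}\|$, while $E^{(L)}_{\mathrm{gs}}\ge 2E^{(n)}_{\mathrm{gs}}-\|B_{L,\ell}\|$. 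Therefore $E^{(L)}_{q_L}-E^{(L)}_{\mathrm{gs}}\le C(1+|q_L|)=o(L)$, with no assumption on how $\mathcal R$ or $M_z$ act on open blocks.

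Equivalently, you could keep your open blocks but place the periodic-chain sector ground states $|\pm p\rangle$ on them. This costs only $O(1)$, because open and periodic Hamiltonians of the same length differ by boundedly many bounded terms.
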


\noindent\textit{Proof.}
We verify the assumptions of
Proposition~\ref{pro:lro-ssb} with
$O^{(L)}=M_x^{(L)}$ and $|\Psi_L\rangle=|q_L\rangle$.
First, to establish the energy condition in Eq.~\eqref{eq:ssb-general-energy-assumption},
we prove the energy estimate
\begin{align}
0\le E_{q_L}^{(L)}-E_{\mathrm{gs}}^{(L)}
&\le C(1+|q_L|)
\label{eq:ssb-sector-energy-bound}
\end{align}
for all sufficiently large $L$, with $C$ independent of both $L$ and the chosen magnetization $q_L$.

Choose $\ell$ to be the smallest positive
integer satisfying
\begin{align}
\ell\equiv L\pmod2,
\qquad S\ell\ge |q_L|,
\end{align}
which gives by construction
\begin{align}
    \frac{|q_L|}{S} \le \ell \le \frac{|q_L|}{S}+2.
        \label{eq:ssb-auxiliary-size}
\end{align}
Set $n=(L-\ell)/2$.
Equation~\eqref{eq:ssb-subextensive-magnetization} implies $\ell=o(L)$, so $n$ is sufficiently large for $H^{(n)}$ to be defined.
Owing to $q_L\in\mathcal Q_L$, $q_L+SL$ is an integer.
Moreover, we have
\begin{align}
q_L+S\ell=(q_L+SL)-2Sn\in\mathbb Z,
\end{align}
and $|q_L|\le S\ell$.
Thus, $q_L$ is also an allowed magnetization for the auxiliary block of length $\ell$.

Partition the chain into three consecutive blocks of lengths $n$, $n$, $\ell$.
On the corresponding tensor-product Hilbert space,
we decompose the Hamiltonian as
\begin{align}
H^{(L)}
&=H^{(n)}\otimes\mathbf1\otimes\mathbf1
+\mathbf1\otimes H^{(n)}\otimes\mathbf1
+B_{L,\ell}.
\label{eq:ssb-block-decomposition}
\end{align}
The first two terms are copies of the periodic
length-$n$ Hamiltonian.
Away from the boundaries of each large block,
$H^{(L)}$ and the corresponding copy of $H^{(n)}$ contain the identical local interaction terms.
The remainder $B_{L,\ell}$ consists of terms involving the auxiliary block and boundary terms, including the periodic-boundary terms of the two copies.
The former contribute $O(\ell)$ terms,
whereas the number of the latter is bounded independently of $L$.
Since the local interaction is fixed and bounded, it follows that
\begin{align}
\|B_{L,\ell}\|\le C(1+\ell).
\label{eq:ssb-boundary-bound}
\end{align}

Choose an allowed magnetization $p$ such that
$E_p^{(n)}=E_{\mathrm{gs}}^{(n)}$.
Magnetization-reversal symmetry in Eq.~\eqref{eq:ssb-reversal-assumption}
ensures
$E_{-p}^{(n)}=E_{\mathrm{gs}}^{(n)}$.
Let $|p\rangle_n$ and $|-p\rangle_n$ be normalized
ground states in the corresponding sectors,
and let $|\chi\rangle_\ell$ be any normalized state of the auxiliary block with magnetization $q_L$.
The product state
\begin{align}
|p\rangle_n\otimes|-p\rangle_n\otimes|\chi\rangle_\ell
\end{align}
has total magnetization $q_L$.
Applying the variational principle to this state yields
\begin{align}
E_{q_L}^{(L)}
&\le2E_{\mathrm{gs}}^{(n)}+\|B_{L,\ell}\|.
\end{align}
Conversely, the first two terms in
Eq.~\eqref{eq:ssb-block-decomposition} are bounded below by $2E_{\mathrm{gs}}^{(n)}$, giving
\begin{align}
E_{\mathrm{gs}}^{(L)}
&\ge2E_{\mathrm{gs}}^{(n)}-\|B_{L,\ell}\|.
\end{align}
Combining these inequalities with
Eqs.~\eqref{eq:ssb-auxiliary-size}
and~\eqref{eq:ssb-boundary-bound},
we obtain Eq.~\eqref{eq:ssb-sector-energy-bound}.
In particular, the subextensive-magnetization condition in Eq.~\eqref{eq:ssb-subextensive-magnetization} implies
\begin{equation}
E_{q_L}^{(L)}-E_{\mathrm{gs}}^{(L)}=o(L),
\end{equation}
as required by Eq.~\eqref{eq:ssb-general-energy-assumption}.

Next, we verify the moment conditions in Eqs.~\eqref{eq:ssb-general-odd-moments} and \eqref{eq:ssb-general-lro}.
Because $|q_L\rangle$ has definite magnetization,
its expectation values are invariant under U(1) rotations.
A rotation by $\pi$ about the $z$ axis reverses the sign of $M_x^{(L)}$, 
while a rotation by $\pi/2$ exchanges
$(M_x^{(L)})^2$ and $(M_y^{(L)})^2$.
Hence, we have
\begin{equation}
\begin{split}
\langle M_x^{(L)}\rangle_{q_L}
&=\langle(M_x^{(L)})^3\rangle_{q_L}=0, \\
\liminf_{L\to\infty} \frac{\langle(M_x^{(L)})^2\rangle_{q_L}}{L^2}
&= \liminf_{L\to\infty} \frac{\langle(M_x^{(L)})^2+(M_y^{(L)})^2\rangle_{q_L}}{2L^2} >0,
    \label{eq:ssb-sector-moments}
\end{split}
\end{equation}
where the inequality in the second line follows from Eq.~\eqref{eq:ssb-sector-lro}.

Finally, we confirm the norm bounds in Eq.~\eqref{eq:ssb-general-norm-bounds}.
Because of $\|S_i^x\|=S$ at each site, we have $\|M_x^{(L)}\|\le SL$.
In the double commutator involving a local term $\hat h_i^{(L)}$, 
only spin operators supported on the sites on which $\hat h_i^{(L)}$ acts contribute.
Their sum has norm at most $r_0S$.
Using the bound $\|[A,B]\|\le2\|A\|\|B\|$ and summing over the $L$ local terms, we obtain
\begin{align}
&\|[M_x^{(L)},[H^{(L)},M_x^{(L)}]]\|
\le4r_0^2S^2\|\hat h\|L.
\label{eq:ssb-spin-commutator-bound}
\end{align}
Accordingly, both norm bounds in Eq.~\eqref{eq:ssb-general-norm-bounds} are satisfied.
In addition to the energy estimate and moment relations established above,
this verifies all the assumptions of Proposition~\ref{pro:lro-ssb}, 
with
$m^2=m_\perp^2/2$.
Equation~\eqref{eq:ssb-field-bound} then follows.
\hfill$\square$

The magnetization sequence $q_L$ need not be constant or convergent.
In particular, any uniformly bounded admissible sequence satisfies Eq.~\eqref{eq:ssb-subextensive-magnetization},
and then $E_{q_L}^{(L)}-E_{\mathrm{gs}}^{(L)}=O(1)$.
Such a bounded admissible sequence exists for every fixed finite spin $S$: 
one may take $q_L=0$ for
$SL\in\mathbb Z$, and $q_L=1/2$ otherwise.
The corollary does not require the existence of a zero-magnetization sector at every size.
When the thermodynamic limit defining $m_\perp^2$ in Eq.~\eqref{eq:ssb-mperp-limit} exists, 
it coincides with the limit inferior in Eq.~\eqref{eq:ssb-sector-lro}.
Similarly, whenever the corresponding field-response limits exist, 
the limits inferior in Eq.~\eqref{eq:ssb-field-bound} may be replaced by
ordinary limits.

\subsection{Application to the two spin chains}
    \label{app:ssb-models}

Let $H_1^{(L)}$ and $H_{1/2}^{(L)}$ denote the
length-$L$ Hamiltonians under the periodic boundary conditions in
Eqs.~\eqref{eq:model-spin-one} and~\eqref{eq:model-spin-half}, respectively.
In both models, the same local interactions are repeated at every site.
Each local term is Hermitian and acts on at most two consecutive sites in the spin-$1$ model and four in
the spin-$1/2$ model.
The corresponding coupling constants are real and independent of $L$.
Both Hamiltonians commute with $M_z^{(L)}$.
For either model, magnetization reversal is implemented by the unitary rotation
\begin{align}
\mathcal R^{(L)}=e^{i\pi M_x^{(L)}},
\end{align}
which acts on the local spin operators as
\begin{align}
S_i^x\mapsto S_i^x,
\qquad
S_i^y\mapsto-S_i^y,
\qquad
S_i^z\mapsto-S_i^z.
\end{align}
Both models then satisfy the symmetry conditions in Eqs.~\eqref{eq:ssb-u1-assumption}
and~\eqref{eq:ssb-reversal-assumption}.

For an admissible magnetization sequence $q_L=o(L)$, 
the only remaining hypothesis for Corollary~\ref{cor:ssb-sector} is transverse long-range order in the ground
states of the corresponding U(1) sector.
The numerical analyses in Secs.~\ref{sec:SSB} and~\ref{sec:spin_half} provide evidence for this long-range order.
Corollary~\ref{cor:ssb-sector} then relates it to spontaneous U(1) symmetry breaking without requiring the reference states in fixed-magnetization sectors to be ground states of the unrestricted finite-size Hamiltonians.

\let\oldaddcontentsline\addcontentsline
\renewcommand{\addcontentsline}[3]{}
\bibliography{ref.bib}
\let\addcontentsline\oldaddcontentsline

\end{document}